\title{The impact of uncertainties on the pricing of contingent claims}
\author{Simone Scotti\footnote{Institut de Mathematique,
Ecole Polythecnique Federale de Lausanne, Station 8, CH-1015
Lausanne, EPFL, Switzerland. Email: simone.scotti@epfl.ch}}
\newtheorem{teorema}{Theorem}[section]
\newtheorem{proposizione}[teorema]{Proposition}
\newtheorem{corollario}[teorema]{Corollary}
\newtheorem{remarque}{Remark}[section]
\newtheorem{definizione}{Definition}[section]
\newtheorem{assumption}{Assumption}
\newtheorem{principle}{Principle}
\newenvironment{proof}{{\textbf Proof:} }{\smallskip \begin{flushright}
$\Box$
\end{flushright}}
\begin{document}

\maketitle

\begin{abstract}
  We study the effect of parameters uncertainties on a stochastic diffusion 
  model, in particular the impact on the pricing of contingent claims, thanks to Dirichlet Forms methods.
  We apply recent techniques, developed by Bouleau, to hedging procedures in order to compute  the sensitivities  of SDE trajectories with respect to parameter perturbations.  We show that this model can reproduce a bid-ask spread.
  We also prove that, if the stochastic differential equation admits a closed form representation,  
  also the sensitivities have closed form representations.
  
  We exhibit the case of log-normal diffusion and we show that this framework foresees a smiled 
  implied volatility surface coherent with historical data. 

\vspace{0.5cm}
 { \bf Keywords: }{Uncertainty, Stochastic Differential Equations, Dynamic Hedging, Error Theory
  using Dirichlet Forms, Bias, Bid-Ask Spread.}

\vspace{0.5cm}
 { \bf AMS:} 60H30, 91B16 and 91B70

  %\JEL{G13 \and C02}

\end{abstract}

\section{Introduction}

The purpose of this article is to study the sensitivity of the flow of a stochastic differential 
equation with respect to its parameters and, in particular, the adjustments on the computation 
of conditional  expectation.
The main application is to evaluate the correction of option pricing due to 
the uncertainty on parameters driving the SDE that describe the prices.

The usual framework of asset pricing is due to Black and Scholes  \cite{bib:Black-Scholes}, 
it provides conditions on the market which guarantee the existence of a hedging strategy to cover a contingent claim and, therefore, the existence and the uniqueness of its price. However, the hypotheses of classical asset pricing have clearly some deficiencies and drawback: they assume in particular a complete knowledge of the market and, in particular, of the parameters driving the SDE that describes the assets prices. 

The risk assessment is often carry out in terms of sensitivity with respect to a deterministic variation of model parameters. These sensitivities are usually called greeks. 
However, this approach encounters mathematical difficulties, in dealing with
infinite dimensions, and practical ones.
A large literature exists in the subject to cover the risk of uncertainty. The most important example is the uncertain volatility model,  see Avelaneda et al. \cite{bib:Avelaneda}, 
 Lyons \cite{bib:Lyons} and Barles \cite{bib:Barles}, which takes into account the difficulties to calibrate the volatility in the Black Schoels model. Their strategy is to apply the stochastic control to super-hedge a contingent claim under model uncertainty. We also recall the recent paper of Denis and Martini \cite{bib:Denis-1} where a more general framework is introduced to study super-replication and model uncertainty, their methodology make use of the Choquet capacities.

The principal drawback of the super-replication techniques is that the super-hedging cost is too high and the corresponding super-hedging strategy is too conservative. 
As an example, Kramkov \cite{bib:Kramkov} shows that the super-hedging cost corresponds to the value of the option under the least favorable martingale measure. Likewise, Bellamy and Jeanblanc \cite{bib:Bellamy} prove that in a jumps-diffusion model the price range for a call option  
corresponds to the interval given by the no-arbitrage conditions, e.g. the super-hedging strategy of a call option is to buy one unit of the underlying. 
It seems therefore that super-hedging is not an effective methodology, since the option buyer will prefer to buy the underlying rather than pay the same price to have an option. The option writer has to take some risk in order to propose a competitive price. In other words, the option writer has to take the risk to lose money, with a small probability that this event occurs, in order to have the opportunity to make money with the contract, i.e. selling the option and hedging it. 

Many attempts to attack the problem are appeared in literature, Avelaneda et al. 
\cite{bib:Avelaneda} bound the volatility between two levels.  Other ways are the pricing via utility maximization, see for instance El Karoui and Rouge \cite{bib:ElKaroui}, or via local risk minimization, see for instance Follmer and Schweizer \cite{bib:Follmer}.

An alternative way, based on potential theory 
(for reference see Albeverio \cite{bib:Albeverio}, Bouleau and Hirsch
\cite{bib:Bouleau-Hirsch} and Fukushima et al.
\cite{bib:Fukushima}), has been suggested by Bouleau
\cite{bib:Bouleau-erreur}, this method is an extension of Malliavin calculus. 
We assume that the uncertainties are supposed to be infinitesimal random variables in
 accord with statistical estimation. These random variables are generally poorly known, 
 so we assume that only the two first order are known, i.e. the biases and the 
 variances/covariances, since the accuracy is usually computed using statistical measures, like Fisher information. 
If the uncertainty on parameter is small, we may neglect  orders
higher than the second, i.e. we take into account only the  bias and the variance of the parameters. 
This cut is justified by the evidence that the parameters estimation is often hard, owing to 
the non-linearity, the complexity of the equations and to the bad quality or the shortage 
of the data.
In error theory using Dirichlet forms, we associate
the variance to the ``carr\'{e} du champ'' operator $\Gamma$ and the
shift to the generator of the semigroup $\mathcal{A}$. The two operators are closed in a suitable space and have a closed chain rule, see Bouleau  \cite{bib:Bouleau-erreur2}.

The main result of this paper gives closed forms for the corrections on the price of contingent claims due to the uncertainties on the parameters. These are corrections for an investor who searches to compensate the systematic bias and that accepts a residual risk on the parameters sensitivity. We define this  risk taking in a statistical way.
A direct consequence of this result is the separation of the buy and sell prices due 
to the asymmetry between the buyer and the seller with respect to the risk on parameters.
An outstanding result is that a systematic bias exists even if the parameters are unbiased, 
this bias is due to the non-linearity of the payoff.
Our analysis covers a large class of stochastic process with continuous path. In particular, equity models with local or stochastic volatility. 

We exhibit, at the end, an example in the case of Black Scholes model with uncertainty on 
volatility parameter. We show that that our analysis foresees a smiled implied volatility and
We give the explicit formula for vanilla prices and the bid-ask spread. 

The paper is organized as follows:
In section 2, we present a survey of error theory using Dirichlet forms technique. 
Section 3 is devoted to the study of the impact of uncertainty on a diffusion model. 
We analyze the profit and loss process and we compute its law depending both on
the underlying diffusion and the parameter uncertainty. We also introduce a 
pricing principle to over-hedge the contingent claim and we exhibit the bid and ask prices.
In section 4, we give an example with log-normal diffusion without drift. We exhibit
the bid and the ask prices and we prove, under some hypotheses, that the implied volatility has
a smiled behavior.

\section{Mathematical Framework for Uncertainty}

We begin by giving a general introduction to the study of the sensitivity with respect to a stochastic perturbation and a formal definition of the framework that we will use in accord with the error theory using Dirichlet forms in accord with  Bouleau \cite{bib:Bouleau-erreur}. In this survey, we follow \cite{bib:Bouleau-erreur2}.

We consider a function $F(U_1,\, U_2, \, ...)$, depending on parameters
 $(U_1,\, U_2, \, ...)$ that we suppose afflicted with uncertainties. We assume that the function $F$
 is regular enough and we search to evaluate the impact on $F$ of the uncertainties on $U_i$. 
 The first study of this problem is due to Gauss that prove the following expansion for the variance of $F$ if the uncertainties are small compared with the parameters values:
 $$
 \text{Var}[F] = \sum_{i,\, j} \frac{\partial F}{\partial U_i} \,  \frac{\partial F}{\partial U_j} \, \text{Var}[U_i,\, U_j] 
 $$
However, this relation is proved only if the number of  the parameters is fixed and if $F$ has an explicit formula, i.e. if it does not be defined via a limit. The main aim of the error theory using Dirichlet forms is to go beyond these limits in order to evaluate the effect of a stochastic fluctuation on more complex objects like stochastic integrals. For that, we study a parameter $x$ with a small uncertainty $\sqrt{\epsilon} \,Y$, on which we compute a non-linear function $f$. Our parameter $x$ is replaced by the random variable $X=x + \sqrt{\epsilon} \,Y$ with variance $\epsilon \,\text{Var}[Y] $. We now apply the Taylor expansion to function $f$ and we find
\begin{eqnarray*}
\text{Bias}[f(X)] & = & \mathbb{E}[f(X)-f(x)] = f^{\prime}(x) \; \text{Bias}[X] + \frac{1}{2}
 f^{\prime \prime}(x) \; \text{Var}[X] + o(\epsilon) \\
\text{Var}[f(X)] & = & \mathbb{E}\left[ \{f(X)-f(x)\}^2 \right] = \left[f^{\prime}(x)\right]^2 \; \text{Var}[X] 
+ o(\epsilon). 
\end{eqnarray*} 
If we suppose $\epsilon$ really small we can cut the high terms on $\epsilon$ and find two closed chain rules for the bias and the variance. These are known in literature since the variance follows the same rule of a carr\'{e} du champ operator of a probability space equipped with a local Dirichlet form, while the bias verifies the rule of the generator of the semigroup associated to the Dirichlet form, see for instance Bouleau and Hirsch \cite{bib:Bouleau-Hirsch}. 
The main advantage of this comparison is that the carr\'{e} du champ  and the generator of semigroup are closed  operator with respect to the graph norm, see for instance Fukushima et al. 
\cite{bib:Fukushima}. Therefore, they are good operators to study objects defined by limits, like stochastic integrals.

The axiomatization of this idea is introduced by Bouleau \cite{bib:Bouleau-erreur} as follows: He defines an error structure as a probability space equipped with a local Dirichlet form owning a  
carr\'{e} du champ.

\begin{definizione}[Error structure]\hfill

An error structure is a term
$\displaystyle  \left( \widetilde{\Omega}, \, \widetilde{\mathcal{F}}, \, \widetilde{\mathbb{P}}, \, \mathbb{D}, \, \Gamma \right)$, where

\begin{itemize}
\item{$\left( \widetilde{\Omega}, \, \widetilde{\mathcal{F}}, \,
\widetilde{\mathbb{P}} \right)$ is a probability space;}
 \item{$\mathbb{D}$ is a dense sub-vector space of $L^2\left(
\widetilde{\Omega}, \, \widetilde{\mathcal{F}},
      \, \widetilde{\mathbb{P}}\right)$;}
\item{$\Gamma$ is a positive symmetric bilinear application from
$\mathbb{D} \, \times
    \, \mathbb{D}$ into $L^1 \left( \widetilde{\Omega}, \, \widetilde{\mathcal{F}},
      \, \widetilde{\mathbb{P}} \right)$ satisfying the functional calculus of class
    $\mathcal{C}^1 \cap Lip$, i.e. if F and G are of class $\mathcal{C}^1$ and Lipschitzian, u and
v $\in \mathbb{D}$, we have F(u) and G(v) $\in \mathbb{D}$ and}
\begin{equation}\label{functional-calculus}
  \Gamma\left[F(u), \, G(v) \right] = F'(u) \,G'(v) \,  \Gamma[u, \, v] \; \;
  \widetilde{\mathbb{P}} \; a.s.; 
  \end{equation}
\item{The bilinear form $\mathcal{E}[u, \, v] = \frac{1}{2}
\widetilde{\mathbb{E}}\left[\Gamma[u,
      \, v]\right]$ is closed;}

We generally write $\Gamma[u]$ for $\Gamma[u,\, u]$.
\end{itemize}
\end{definizione}

With this definition $\mathcal{E}$ is a Dirichlet form and $\Gamma$ is the associated carr\'{e} du champ operator. The Hille Yosida theorem, see for instance Albeverio \cite{bib:Albeverio} and Fukushima et al. \cite{bib:Fukushima}, 
guarantees that it exists a semigroup, and, then, a generator $\mathcal{A}$ coherent with the Dirichlet form  
$\mathcal{E}$. This generator $( \mathcal{A}, \, \mathcal{D}  \mathcal{A})$  is a 
self-adjoint operator that satisfies, for $F \in \mathcal{C}^2$, $u \in \mathcal{D}  \mathcal{A}$ 
and $\Gamma[u] \in L^2(\widetilde{\mathbb{P}})$:
\begin{equation}\label{bias-chain-rule}
  \mathcal{A}\left[F(u) \right] = F'(u)\,  \mathcal{A}[u] + \frac{1}{2} F''(u)\,  \Gamma[u] \; \;
  \widetilde{\mathbb{P}} \; a.s.;
\end{equation}
moreover, it is a closed operator with respect to the graph norm.
We underline two important results related to this theory. First of all the concept of error structure is 
deeply related to statistics. As a matter of fact, the uncertainties on parameters come from a statistical estimation and Bouleau and Chorro \cite{bib:Bouleau-Chorro} have proved a connection between error theory using Dirichlet forms and Fischer information theory. 
Furthermore, the error structures have nice properties, in particular is possible to prove that the product of two or countably many error structures is an error structure, see Bouleau \cite{bib:Bouleau-erreur2}. 

The main drawback of the carr\'{e} du champ operator is its bi-linearity that makes computations awkward to perform. An easy way to overcome it is to introduce a new operator, the sharp.
We recall the definition of sharp operator associated with $\Gamma$, see 
Bouleau and Hirsch \cite{bib:Bouleau-Hirsch}, section II.6. for the proof.

\begin{proposizione}[Sharp operator]\label{prop:sharp}\hfill

Let $\left( \widetilde{\Omega}, \, \widetilde{\mathcal{F}}, \,
\widetilde{\mathbb{P}}, \, \mathbb{D}, \, \Gamma \right)$ be an error
structure and $\left( \widehat{\Omega}, \, \widehat{\mathcal{F}},
\, \widehat{\mathbb{P}} \right) $ a copy of the probability space
$\left( \widetilde{\Omega}, \, \widetilde{\mathcal{F}}, \,
\widetilde{\mathbb{P}}\right) $. We assume 
that the space $\mathbb{D}$ is separable, then there exists a
sharp operator $(\,)^{\#}$ with these three properties:

\begin{itemize}

\item{$\forall \, u \in \mathbb{D}$, $u^{\#} \in
L^2(\widetilde{\mathbb{P}} \times \widehat{\mathbb{P}})$;}

\item{$\forall \, u \in \mathbb{D}$, $\Gamma[u] =
\widehat{\mathbb{E}}\left[\left(u^{\#}\right)^2\right]$;}

\item{$\forall \, u \in \mathbb{D}^n$ and $F \in \mathcal{C}^1 \cap
Lip$,  $\left(F(u_1, \, ... \,, \, u_n)\right)^{\#} =
 \sum_{i=1}^n \left(\frac{\partial F}{\partial x_i}\circ u \right) u_i^{\#}  $.}

\end{itemize}
\end{proposizione}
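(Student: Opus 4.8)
The plan is to realize the operator $(\,)^{\#}$ as a linear \emph{gradient} taking values in a fixed separable Hilbert space $H$, and then to identify $H$ with $L^2(\widehat{\mathbb{P}})$. Concretely, I would first reduce the statement to constructing a linear map $D\colon\mathbb{D}\to L^2(\widetilde{\mathbb{P}};H)$ satisfying $\langle Du,\,Dv\rangle_H=\Gamma[u,v]$ $\widetilde{\mathbb{P}}$-a.s.\ together with the first order chain rule inherited from \eqref{functional-calculus}. Once such a $D$ is available, the second listed property is exactly the diagonal identity $\Gamma[u]=\langle Du,Du\rangle_H$, and choosing an isometry of $H$ into $L^2(\widehat{\mathbb{P}})$ (both being separable Hilbert spaces) turns $Du$ into an element $u^{\#}$ of $L^2(\widetilde{\mathbb{P}}\times\widehat{\mathbb{P}})$ with $\Gamma[u]=\widehat{\mathbb{E}}[(u^{\#})^2]$, which gives the first two properties.

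The backbone of the construction is a single continuity estimate. Integrating the desired identity in $\widetilde{\omega}$ yields
$$\|u^{\#}\|_{L^2(\widetilde{\mathbb{P}}\times\widehat{\mathbb{P}})}^2=\widetilde{\mathbb{E}}\big[\widehat{\mathbb{E}}[(u^{\#})^2]\big]=\widetilde{\mathbb{E}}[\Gamma[u]]=2\,\mathcal{E}[u],$$
so, up to the constant $\sqrt{2}$, the map $u\mapsto u^{\#}$ is an isometry from $\mathbb{D}$ endowed with the energy seminorm $u\mapsto\mathcal{E}[u]^{1/2}$ into $L^2(\widetilde{\mathbb{P}}\times\widehat{\mathbb{P}})$. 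Hence it suffices to define $(\,)^{\#}$ on a countable $\mathbb{Q}$-subalgebra $\mathcal{A}\subset\mathbb{D}$ that is dense in the graph norm and stable under a countable dense family of $\mathcal{C}^1\cap Lip$ maps; the operator then extends uniquely and continuously to all of $\mathbb{D}$, and the two properties, being continuous in $u$, pass to the limit, using that the partial derivatives of a Lipschitz $F$ are bounded so that the right hand side of the third property is continuous in $u$ as well.

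To build $(\,)^{\#}$ on $\mathcal{A}$, I would fix a countable generating family $(u_i)_{i\ge 1}$ and consider the random symmetric kernel $K_{ij}(\widetilde{\omega})=\Gamma[u_i,u_j](\widetilde{\omega})$, which by positivity of $\Gamma$ is $\widetilde{\mathbb{P}}$-a.s.\ positive semidefinite on $\mathbb{N}\times\mathbb{N}$. The essential step is a \emph{measurable factorization}: produce a measurable map $\widetilde{\omega}\mapsto G(\widetilde{\omega})$ into linear operators on $H$ with $\langle G(\widetilde{\omega})e_i,\,G(\widetilde{\omega})e_j\rangle_H=K_{ij}(\widetilde{\omega})$, for instance through the measurable square root of the positive operator $K(\widetilde{\omega})$ or a measurable Gram--Schmidt procedure, and set $u_i^{\#}:=G(\cdot)e_i$. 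Linearity and well definedness on $\mathcal{A}$ follow because any finite combination $w=\sum_i\lambda_i u_i$ vanishing in $\mathbb{D}$ has $\Gamma[w]=0$, hence $\|w^{\#}\|=0$; and the chain rule is checked on $\mathcal{A}$ by testing the difference $\big(F(u)\big)^{\#}-\sum_i(\partial_i F\circ u)\,u_i^{\#}$ against every generator in the $\widehat{\mathbb{E}}$ inner product and using the functional calculus \eqref{functional-calculus} to see that all these inner products vanish, so that the difference is $0$ in $L^2$.

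I expect the measurable factorization to be the main obstacle: one must select the representing vectors $u_i^{\#}$ \emph{jointly measurably} in $\widetilde{\omega}$ while simultaneously respecting every linear relation among the $u_i$ inside $\mathbb{D}$, the quadratic identity with $\Gamma$, and the nonlinear functional calculus. Reconciling a linear-algebraic square root with the chain rule is the delicate point, and separability of $\mathbb{D}$ is precisely what makes the required countable bookkeeping, and the measurable selection of the square root, possible.
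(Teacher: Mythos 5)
The paper never proves this proposition: it is quoted from Bouleau--Hirsch with an explicit pointer to Section II.6 of their book, and the construction there is in essence the one you outline --- a measurable factorization of the random Gram matrix $\Gamma[u_i,u_j]$ of a countable dense family, yielding a Hilbert-space-valued gradient, followed by an isometric identification of that Hilbert space with a subspace of $L^2(\widehat{\mathbb{P}})$. So your route is the standard (cited) one, and your backbone is right: the identity $\|u^{\#}\|_{L^2(\widetilde{\mathbb{P}}\times\widehat{\mathbb{P}})}^2=\widetilde{\mathbb{E}}[\Gamma[u]]=2\,\mathcal{E}[u]$ makes the sharp a $\sqrt{2}$-isometry for the energy seminorm, and the extension from a countable dense set is well defined despite $\mathcal{E}^{1/2}$ having a nontrivial kernel, precisely because the map is linear (if $u_n\to u$ and $v_n\to u$ in graph norm, then $\mathcal{E}[u_n-v_n]\to 0$, hence $(u_n-v_n)^{\#}\to 0$).

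What you flag as the main delicacy --- reconciling the factorization with the nonlinear functional calculus --- is in fact a non-issue: the third property is \emph{automatic} for any \emph{linear} map satisfying the second one. Linearity plus $\widehat{\mathbb{E}}[(u^{\#})^{2}]=\Gamma[u]$ give, by polarization, $\widehat{\mathbb{E}}[u^{\#}v^{\#}]=\Gamma[u,v]$ for all $u,v\in\mathbb{D}$; hence, for $F\in\mathcal{C}^{1}\cap Lip$ and $u\in\mathbb{D}^{n}$, setting $W=\left(F(u)\right)^{\#}-\sum_{i}\left(\partial_{i}F\circ u\right)u_{i}^{\#}$ and using that the coefficients $\partial_{i}F\circ u$ are bounded and $\widetilde{\mathcal{F}}$-measurable,
\begin{align*}
\widehat{\mathbb{E}}\left[W^{2}\right]
&=\Gamma[F(u)]-2\sum_{i}\left(\partial_{i}F\circ u\right)\Gamma[F(u),u_{i}]
+\sum_{i,j}\left(\partial_{i}F\circ u\right)\left(\partial_{j}F\circ u\right)\Gamma[u_{i},u_{j}]=0
\end{align*}
by the functional calculus (\ref{functional-calculus}) for $\Gamma$ (both your testing argument and this computation need its $n$-dimensional form, which the paper writes only for $n=1$ but which is part of the definition of an error structure in the cited references). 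Consequently the only real content is the linear, jointly measurable factorization $\langle g_{i},g_{j}\rangle_{H}=\Gamma[u_{i},u_{j}]$ a.s. There, prefer your Gram--Schmidt/Cholesky alternative to the ``measurable square root of $K(\widetilde{\omega})$'': the Gram matrix of a dense sequence need not define a bounded operator on $\ell^2$, and square roots of finite sections are not nested as the section grows, whereas the Cholesky recursion (with the convention $0/0=0$ at vanishing pivots, which positive semidefiniteness makes consistent) is nested, Borel in the entries, and respects linear relations since $\Gamma[w]=0$ forces $w^{\#}=0$. Two residual details: the final embedding $H\hookrightarrow L^{2}(\widehat{\mathbb{P}})$ requires $L^{2}(\widehat{\mathbb{P}})$ to be infinite dimensional, which only fails in the purely atomic case where the functional calculus forces $\Gamma\equiv 0$ and $u^{\#}=0$ works trivially; and passing the chain rule to the limit is no longer needed at all, since the computation above holds directly on $\mathbb{D}$ once properties one and two are established there.
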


The sharp operator is an useful tool to compute $\Gamma$ because the
sharp is linear whereas the carr\'{e} du champ is bilinear. 
In analogy with the classical approach of error theory we associate
the carr\'{e} du champ operator $\Gamma$ to the
normalized variance of the error, the sharp operator becomes a linear 
version of the standard deviation of the error.
Similarly, the generator describes the error biases after
normalization, for more details we refers to  Bouleau
\cite{bib:Bouleau-erreur} chapters III and V and
\cite{bib:Bouleau-erreur2}. 

In order to apply the result of error theory to our problem, we assume that
we can define an error structure for each parameter of our model, 
we denote them $a_i$, and that we suppose the following assumption held 
for each parameter.

%%%%%%%%%%%%%%%%%%%%%%

\begin{assumption}[Error theory framework]\label{assumption-1}\hfill

\begin{enumerate}
\item $a_i \in \mathcal{D} \mathcal{A}$, $\Gamma[a_i]$ and $\mathcal{A}[a_i]$ are known;

 \item the functions $x \mapsto \Gamma[a_i](x)$ and $x \mapsto \mathcal{A}[a_i](x)$ 
 belong to $L^1 \cap L^2$ are continuous at $x=a_i$ and not 
 vanishing\footnote{We use an abuse of language, since when we write
 $\Gamma[a_i]$ and $\mathcal{A}[a_i]$, we mean that $a_i$ is a random variable. 
 Afterwards, we compute $\Gamma[a_i]$ and $\mathcal{A}[a_i]$ on the point $x=a_i$,
 where $a_i$ denotes the estimated value of the parameter.};
\item the error structure admits  a sharp operator denoted $( \,)^{\#}$.
\end{enumerate}

\end{assumption}

We conclude with a direct corollary of this theory. The error theory using Dirichlet forms
restricts its analysis to the study of two first order errors propagation, i.e. the  bias and the
variance. This fact is justified by the knowledge on the parameters uncertainties, generally 
given by the Fischer information matrix, that is often bad. The study of the high orders is a very
hard problem for both mathematical and practical reasons. From mathematical point of view,
we have to study the chain rules of the high orders, like skewness and kurtosis, and to prove 
that the related operators are closed in a suitable space. However, the crucial problem 
remains to have fine estimations for the high orders of the uncertainties, this obstacle cannot be 
overcome easily. Therefore, we decide to restrict our study to the two first order. 
As a consequence, we implicitly work with gaussian random variables.   
Finally we can state:

\begin{remarque}[impact of uncertainty]\label{remark-expansion}\hfill

The impact of uncertainty on the parameter transforms a constant into
a gaussian distribution of the form
\begin{equation}
F(X) \sim F(x) + \epsilon \;
A[F(X)](X= x) + \sqrt{\epsilon} \;
\sqrt{\Gamma[F(X)](X= x)} \; G
\end{equation}
where G is a standard Gaussian variable and $\epsilon$ is a vanishing parameter.
\end{remarque}

In a more conservative way, i.e. if we suppose that the gaussian approximation is
too rough, we can apply the Chebyshev's inequality that state in our case:

\begin{proposizione}[Chebyshev's inequality]\label{prop:Cheb}\hfill

The random variable $F(X)$ verifies the following inequality for all $k\geq 1$ and supposing
that $\epsilon$ is a vanishing parameter.
\begin{equation}\label{cheb-equation}
\widetilde{\mathbb{P}}\left[F(X)-F(x) - A[F(X)](X= x) \geq k\,  \sqrt{\epsilon 
\; \Gamma[F(X)](X= x)}  \right] \leq \frac{1}{1+k^2}
\end{equation}
\end{proposizione}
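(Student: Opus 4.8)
The plan is to recognize the bound $\frac{1}{1+k^2}$ as the signature of the \emph{one-sided} Chebyshev inequality (Cantelli's inequality), rather than the classical two-sided version, which would only yield $\frac{1}{k^2}$. The starting point is Remark \ref{remark-expansion}: to leading order in $\epsilon$, the random variable $F(X)-F(x)$ is centered at its bias $A[F(X)](X=x)$ and has variance $\epsilon\,\Gamma[F(X)](X=x)$, these two quantities being furnished by the chain rules \eqref{bias-chain-rule} and \eqref{functional-calculus}. So I would first introduce the centered variable $W := F(X)-F(x)-A[F(X)](X=x)$ and record that $\widetilde{\mathbb{E}}[W]=0$ and $\widetilde{\mathbb{E}}[W^2]=\epsilon\,\Gamma[F(X)](X=x)=:\sigma^2$, up to terms of order $o(\epsilon)$.

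Next I would establish Cantelli's inequality for $W$. The key device is to introduce a free nonnegative parameter $t\geq 0$ and write, for $\lambda>0$,
$$\widetilde{\mathbb{P}}[W\geq\lambda]=\widetilde{\mathbb{P}}[W+t\geq\lambda+t]\leq\widetilde{\mathbb{P}}\bigl[(W+t)^2\geq(\lambda+t)^2\bigr]\leq\frac{\widetilde{\mathbb{E}}[(W+t)^2]}{(\lambda+t)^2}=\frac{\sigma^2+t^2}{(\lambda+t)^2},$$
where the penultimate step is Markov's inequality applied to the nonnegative variable $(W+t)^2$ and the final equality uses $\widetilde{\mathbb{E}}[W]=0$. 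Minimizing the right-hand side over $t\geq 0$ gives the optimizer $t^\star=\sigma^2/\lambda$, and substituting collapses the bound to $\widetilde{\mathbb{P}}[W\geq\lambda]\leq\sigma^2/(\sigma^2+\lambda^2)$. Specializing to $\lambda=k\sqrt{\epsilon\,\Gamma[F(X)](X=x)}=k\sigma$ then yields exactly $\sigma^2/(\sigma^2+k^2\sigma^2)=1/(1+k^2)$, which is the claim.

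The hard part — really the only conceptual subtlety — is the passage from the exact random variable to its first-order mean and variance: the expansion of Remark \ref{remark-expansion} holds only up to $o(\epsilon)$, so the statement must be read as an asymptotic bound as $\epsilon\to 0$, and one should check that the remainders in both the centering and the variance do not spoil the inequality in the limit. The one-variable optimization is routine, and I would note in passing that the restriction $k\geq 1$ in the hypothesis is not needed for the argument — Cantelli's bound holds for every $k>0$ — the condition merely guarantees that the probability estimate $1/(1+k^2)$ is meaningfully small.
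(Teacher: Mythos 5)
Your proof is correct, and it is worth noting that the paper itself offers no proof of this proposition at all: the statement is given as a consequence of the error-theory framework, with the reader referred to Bouleau \cite{bib:Bouleau-MC} for the underlying analysis. Your self-contained derivation -- recognizing the bound $\frac{1}{1+k^2}$ as Cantelli's one-sided inequality rather than the classical two-sided Chebyshev bound $\frac{1}{k^2}$, proving it by the shift-and-Markov device with the optimal shift $t^{\star}=\sigma^2/\lambda$, and then specializing $\lambda=k\sigma$ -- is the standard route to this bound and supplies exactly what the paper leaves implicit. Two remarks. First, you correctly isolate the only delicate point: the mean and variance of $F(X)-F(x)$ are furnished by the chain rules only up to $o(\epsilon)$, so the inequality must be read asymptotically as $\epsilon\to 0$; the paper hides this behind the phrase ``supposing that $\epsilon$ is a vanishing parameter.'' Second, you silently corrected an inconsistency in the statement itself: for your $W$ to satisfy $\widetilde{\mathbb{E}}[W]=0$, the subtracted bias must be $\epsilon\,A[F(X)](X=x)$, not $A[F(X)](X=x)$ -- compare Remark \ref{remark-expansion}, where the bias enters at order $\epsilon$, and the Chebyshev inequality stated inside Theorem \ref{theorem-main}, where the factor $\epsilon$ does multiply $\mathcal{A}$; without it, the centering would be off at order one. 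Your closing observation that the hypothesis $k\geq 1$ is not needed for the argument (it only makes the bound $\frac{1}{1+k^2}$ a meaningful improvement) is also accurate.
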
 

These results explain the role of the generator and the carr\'{e} du
champ operator, see Bouleau \cite{bib:Bouleau-MC} for a more general analysis.
The theoretical image is
perturbed due to the uncertainty on the parameter, this effect is
small, however  it produces not only a noise but also it alters the
mean.

\section{Diffusion model under perturbation}

We start with the classical Black Scholes model, afterward denoted BS, see Black and Scholes
\cite{bib:Black-Scholes}. Let
$\left(\Omega, \, \mathcal{F}, \, \mathbb{P}\right)$ be the historical
probability space and $B_t$ the associated Brownian motion,  the dynamic of the risky asset under historical
probability $\mathbb{P}$ is given by the following diffusion in accord with the model of Black and Scholes:
\begin{equation}\label{Black-Scholes}
dS_t  =  S_t \, \mu \, dt + S_t \, \sigma_0 \, dB_t 
\end{equation}
In this framework, the price of a European vanilla option is well known.
This model presents many advantages, in particular the pricing depends
only on volatility and we find closed forms for premium
and greeks of vanilla options. Unluckily, the BS model cannot
reproduce the market price of call options for all strikes at the
same volatility, this effect is called smile.
To take into account this phenomenon, we analyze two main extensions, the local and the stochastic volatility models, afterward denoted LV and SV, see for instance Dupire \cite{bib:Dupire}, Hull and White \cite{bib:Hull} and Heston \cite{bib:Heston}. 
In this two classes of models, the parameter $\sigma_0$ is replaced by a function $\sigma$ that depends on the time $t$, on the underlying $S_t$ and, in the case of SV models, on a random source. The stochastic differential equation verified by the price of the underlying is
\begin{equation}\label{SDE-SV}
dS_t = S_t \, \mu \, dt + S_t \, \sigma(t, \, S_t, \, \omega) \, dB_t 
\end{equation}
This class of model is large enough to include a lot of stochastic model currently used in 
finance, like SABR and fast mean reverting SV, see Hagan et al \cite{bib:Hagan} and Fouque et al \cite{bib:Fouque}. However, all diffusion with jumps model are excluded from our analysis, see 
for instance Cont and Tankov \cite{bib:Cont}.

Henceforth, we denote $(\Omega, \, \mathcal{F},\, \mathbb{P})$ the probability space where the Brownian motion is defined, $\mathcal{F}_t$ the standard filtration generated by the Brownian motion $B_t$ and $\mathbb{E}$ the expectation under the probability $\mathbb{P}$.   

Our goal is to analyze the sensitivities of these models with respect to the fluctuations on their parameters. As a matter of fact, all models depend on some parameters, generally few, that depend on the underlying and enable the calibration of the model. When we sort out a given value for a parameter, thanks to a calibration methodology, it is known with an uncertainty. This uncertainty can be estimated using statistical methods, like Fischer information, or computing the sensitivity of the model calibration. We underline that each method, both statistical and calibrative, yields parameters with uncertainties and these uncertainties have a random nature. Our goal is to analyze their impact on the management and the hedging of a contingent claim on the side of the seller that tries to escape any risk. 

We propose to consider a perturbation of the model given by SDE (\ref{SDE-SV})  by means of an
error structure on the volatility function $\sigma(t, \,S_t, \, \omega)$.

We make the following  hypotheses:

\begin{assumption}[Asset evolution and uncertainty impact]\label{assumption-fin}\hfill

\begin{enumerate}
\item{ the real market follows the SDE (\ref{SDE-SV})  with fixed but unknown parameters, i.e. the underlying follows the SDE (\ref{SDE-SV}) and this diffusion does not suffer the uncertainty on the parameters. The market is viable and complete, i.e. they are enough traded assets to guarantee that any contingent claims admits  an hedging portfolio;}

\item{the option seller knows that the underlying follows the SDE (\ref{SDE-SV}) but does not know the function $\sigma$, i.e. the values of the parameters;}

\item{ the option seller has to estimate the parameters of his model, so its
volatility contains intrinsic inaccuracies, we model this ambiguity
by means of an error structure; nonetheless we assume that the stock
price $S_t$ is not erroneous, but the price and the greeks of the option yes. 
The option seller evaluates the impact of this uncertainty on his profit and loss process, afterward denoted $P\&L$, and tries to compensate it modifying his prices}.
\end{enumerate}
\end{assumption}

\begin{assumption}[Uncertainty on volatility]\label{assumption-math}\hfill

\begin{enumerate}
\item{the uncertainty on the volatility function $\sigma(t, \,S_t, \, \omega)$ is described by an error structure $(\widetilde{\Omega}, \, \widetilde{\mathcal{F}}, \, \widetilde{\mathbb{P}}, \mathbb{D}, \, \Gamma)$ associated to the function $\sigma$;} 
\item{the uncertainty on $\sigma$ is independent of the remain of the model, i.e. of the probability space $(\Omega, \, \mathcal{F}, \, \mathbb{P})$   where the Brownian motion $B_t$ is defined;}
\item{the error structure and the function $\sigma$ verify the assumption  \ref{assumption-1}.}
\item{the function $\sigma(t, \, x, \, \omega)$ is square integrable and admits a series expansion, i.e. $\sigma(t, \, x, \, \omega) = \sum_i a_i \, \phi_i(t, \, x, \, \omega)$, where the functions $\phi_i$ belong to $C^{1, \, 2}$ on $(t,\, x)$ and are square integrable on the third variable. We assume that the uncertainty is carried by the coefficients $a_i$, i.e. we set the $a_i$ to be random and the error structure is given by the product  of the error structures on each $a_i$. We also suppose held the property 4.2. in Bouleau \cite{bib:Bouleau-erreur} chapter V page 84, thanks to that only a finite number of coefficients is non zero.}
\end{enumerate}
\end{assumption}

Before continuing, we discuss briefly the two previous assumptions. We have divided them depending on the nature of the assumption, more financial for the first one and mathematical for the second. Assumption \ref{assumption-math} sets the mathematical frame used in this paper, points 2 and 4 are assumed to make easier the proof of results and to simplify some computation, but they can be replaced with more general hypothesis. 

The main financial hypotheses are resumed in assumption \ref{assumption-fin}. The first point states that the underlying follows an exact SDE without perturbation. The uncertainty is added by the option seller when he searches to know the parameters of the SDE. This assumption is easy to 
understand in financial frame and avoids many problems on the mathematical framework. In particular when the volatility is uncertain the set of probabilities describing the whole class of possible probabilistic views is not  dominated. In this case, the classical approach used in mathematical finance cannot be followed, the main examples about this subject are Denis and Martini \cite{bib:Denis-1} and Denis and Kervarec \cite{bib:Denis-2}.   

We also add the following hypothesis on the class of contingent claim analyzed in this work.

\begin{assumption}[Contingent Claims]\label{assumption-payoff}\hfill

Let $\Phi$ be the payoff of the contingent claim. We assume that $\Phi$ depends only on $S_T$ and belongs to $C^2$. We also suppose that the two first derivatives are bounded.
\end{assumption}

It is plain that the previous assumption is very restrictive, since call and put does not verifies for instance. However, it is easy to define a series of payoff $\overline{\Phi}_i$ that dominates the payoff of a call (or a put) and  converges uniformly and to take the limit using the hypothesis of no free lunch with vanishing risk.

\subsection{Management of profit and loss process}

We consider  that the option seller uses his data and proprietary view to set 
the model parameters using an optimization procedure, see for instance Dupire \cite{bib:Dupire}  and Cont Tankov \cite{bib:Cont}. Given these parameters, the risk neutral-measure $\mathbb{Q}$ 
exists and is unique, this probability enable him to define the fair price of the option and the hedging strategy under the hypothesis that we knowledge the real values of the paprameters. 

We study the profit and loss process associated to this hedging portfolio.
We neglect, in this first work, that the option holder can sell the option, so we assume that he holds the contingent claim until the maturity. We also assume that all prices are denominated using the risk-free asset as numeraire.   
The profit and loss process at the maturity of the option seller is given by the price of the option plus the hedging strategy minus the final payoff that the seller has to pay to the holder, so we have
\begin{equation}\label{equation:P-and-L}
P\&L(T) = F(\varsigma, \, x, \, 0) + \int_0^T \Delta(\varsigma, \, S_t, \, t) dS_t - \Phi(S_T),
\end{equation}
where $F$ and $\Delta$ denote respectively  the price  of the option and its first derivative with respect to the underlying. The key remark about the P\&L equation   
(\ref{equation:P-and-L}) is that the price and the delta depend on the volatility process, estimated by the trader, $\varsigma$ that is not the real market volatility $\sigma$, i.e. the parameters used by the option seller to compute the price and the hedging position are their estimated parameters that are different from the real parameters in diffusion (\ref{SDE-SV}). It is plain that a good calibration produces a set of parameters very close to the real values but the uncertainty remains, as an example the variance of a unbiased statistical estimator is at least high as the Cramer Rao bound.  Moreover, thanks to a result of Bouleau and Chorro \cite{bib:Bouleau-Chorro}, $\Gamma$ is
equal the inverse of the Fisher information matrix, so the use of error theory using Dirichlet forms  
is efficient since it exploits all information on data given the fact that the Cramer Rao bound is verified.
  
We concentrate our attention on the law of the P\&L at the maturity. In absence of uncertainty the random value $P\& L(T)$ is equal to zero almost surely, i.e. the option can be exactly hedged. However, the exact hedging strategy cannot be performed since the option seller does not know the parameters value. Therefore, $P\&L(T)$ is a random variable and we have the following remark.  

\begin{remarque}[Random sources]\hfill

The value of profit and loss process at the maturity is a random variable depending on two random
sources:

\begin{itemize}
    \item First of all, the stochastic "real" model $(\Omega, \, \mathcal{F},\, \mathbb{P})$ since the trader cannot use
    the correct hedging portfolio.
    \item Second, the space $(\widetilde{\Omega}, \, \widetilde{\mathcal{F}},\, \widetilde{\mathbb{P}})$, i.e.  the stochastic process $\varsigma$, that depends
    on a random component independent to the brownian motion $B_t$.
\end{itemize}
\end{remarque}

\begin{remarque}[Role of historical probability]\hfill

The profit and loss process must be studied on historical probability
$\mathbb{P}$. As a matter of fact, the risk neutral probability $\mathbb{Q}$ can be used if and only if the market is complete, i.e. if all contingent claims are attainable. In our case, the option seller does not know the real diffusion of the underlying, so the law of the $P\&L$ is not degenerate. The main impact is that the drift of the SDE (\ref{SDE-SV}) plays a crucial role and that the second term in $P\&L$ process is not a martingale. This fact complicates the computations on this paper. However, the important role of the drift $\mu$ in asset pricing when the market is incomplete is emphasized in literature, let us mention Avelaneda et al. \cite{bib:Avelaneda}, Karatzas et al. \cite{bib:Karatzas} and Lyons \cite{bib:Lyons}. 
\end{remarque}

The two previous remarks show that the computation of the law of the $P\&L$ process is an hard problem. Moreover, an essential requisite lacks to define the law of the $P\&L$ since the $P\&L$ depends on $\varsigma$ and the law of this process is poorly known since it depends on a calibration methodology. We assume, implicitly on assumption \ref{assumption-math}, that the option seller can estimate the mean value of his parameters and their variance, i.e. the two first orders of the law of $\varsigma$. The knowledge of the high orders is a statistical problem often too hard to solve. 

Given the knowledge of the two first orders of the law of $\varsigma$, we can estimate the two first orders of the law of $P\&L$ at  best. This remark justifies our recourse to the error theory using Dirichlet forms, see Bouleau \cite{bib:Bouleau-erreur}. More precisely, we have remarked that the profit and loss process depends on two random sources, one is the stochastic process $\varsigma$ depends on some parameters that are random variables on the space $(\widetilde{\Omega}, \, \widetilde{\mathcal{F}}, \, \widetilde{\mathbb{P}})$, the second is the Brownian motion $B_t$ that live on the probability space $(\Omega, \, \mathcal{F}, \, \mathbb{P})$. We can define accurately the law of $P\&L$ with respect to the random source on the probability space $\mathbb{P}$ but 
we can just estimate the two first orders of the dependency with respect to 
$\widetilde{\mathbb{P}}$.

In order to analyze the law of P\&L process, it is sufficient to
study the $\mathbb{P}$-expectation on a class of regular test functions $h(P \& L)$ and
the error on them using error theory. In practice we will compute the bias and the variance of $\mathbb{E}[h(P\&L)]$, where $\mathbb{E}$ denotes the expectation under the probability $\mathbb{P}$. To simplify our results, we suppose moreover the following assumption held.

\begin{assumption}[Expansion Approach]\label{assumption-EA}\hfill

We assume that the volatility estimated by the option seller is unbiased, i.e.  we
compute the value of our expectation for $\varsigma= \sigma$.
\end{assumption}

The previous assumption involves in particular that the $P\&L$ is worth zero $\mathbb{P}$-almost surely, so this hypothesis enable us to simplify our results in two ways. First of all, we don't need to distinguish the two volatilities $\sigma$ and $\varsigma$, we are interested only on the stochastic correction, in this sense we have an expansion approach. Secondly under $\mathbb{P}$ the random variable $P\& L(T)$ is a constant.
Then we can state the following theorem.

 \begin{teorema}[Law of the profit and loss process]\label{theorem-main}\hfill
 
 Under assumptions  \ref{assumption-fin}, \ref{assumption-math}, \ref{assumption-payoff} and 
 \ref{assumption-EA},
we have the following bias and variance:

\begin{eqnarray}
A[\mathbb{E}[h(P \& L)]] & = &  h'(0) \, \Lambda_1 (\sigma) +
\frac{1}{2} \, h''(0) \, \Lambda_2(\sigma) \label{A-P-and-L} \\
    \Gamma \left[\mathbb{E}\left[h\left( P\&L \right)\right]
    \right] & = &  [h'(0)]^2 \; \Psi (\sigma) \label{gamma-P-and-L},
\end{eqnarray}
where
\begin{eqnarray*}
\Lambda_1(\sigma) & = &  \sum_i  \frac{\partial F}{\partial \sigma(\phi_i)} 
(\varsigma, x,\, 0) \;\phi_i(0,\, x) \; \mathcal{A}[a_i](a_i) + \frac{1}{2}  \sum_{i} \frac{\partial^2 F}{\partial [\sigma(\phi_i)]^2 
}  (\varsigma, x,\, 0) \;\phi^2_i(0,\, x)  \; \Gamma[a_i](a_i) \\
&  & \displaystyle + \sum_i \int_0^T \mu \, \mathbb{E} \left[ \frac{\partial \Delta}{\partial 
\sigma(\phi_i)} (\varsigma, S_t,\, t) \;\phi_i(t,\, S_t,\, \omega) \, S_t \right]\, dt \; 
\, \mathcal{A}[a_i](a_i)   \\
 &  &  \displaystyle  + \frac{1}{2}  \sum_{i}  \int_0^T  \mu \, \mathbb{E}\left[
\frac{\partial^2 \Delta}{\partial [\sigma(\phi_i)]^2 
}  (\varsigma, S_t,\, t) \;\phi^2_i(t,\, S_t,\, \omega)\, S_t \right] \, dt \; \, \Gamma[a_i](a_i) \\ 
\Lambda_2(\sigma) & =& \displaystyle 
  \sum_i   \Gamma \left[ a_i \right] \left( a_i \right) \; \,  \left\{   \int_0^T \mathbb{E} \left[  \left(
\frac{\partial \Delta}{\partial \sigma(\phi_i)}(\varsigma, \, S_s, \, s) \, \phi_i(s,\, S_s,\, \omega)
\right)^2 \; S_s^2 \; \sigma^2 \right] ds   \right. \\
& & \displaystyle + \left. \mathbb{E} \left[
 \left( \frac{\partial F}{\partial \sigma(\phi_i)}(\varsigma,\, x, \, 0)\, \phi_i(0, \, x) + 
 \mu\, \int_0^T  \frac{\partial \Delta}{\partial \sigma(\phi_i)}(\varsigma, \, S_s, \, s)  \,  
 \phi_i(s,\, S_s, \, \omega) \, S_s \, ds 
        \right)^2 \right] \right\} \\
\Psi(\sigma) & = & \sum_i  \left\{\frac{\partial
F}{\partial \sigma (\phi_i)}(\sigma, x,
    0) \, \phi_i(0, \, x) + \mu \,  \int_0^T \mathbb{E} \left[ \frac{\partial \Delta}{\partial \sigma(\phi_i)}(\sigma, \, S_s, s) \, S_s \, \phi_i(s, \, S_s, \, \omega) \right] ds \right\}^2 \Gamma[a_i](a_i),
\end{eqnarray*}
where $\frac{\partial }{\partial \sigma(\phi_i)}$ denotes the Gateaux derivative with respect to a variation of the volatility along the component $\phi_i$.
Moreover, we have the following truncated expansion
\begin{equation}\label{equation:rel_h}
  \mathbb{E}\left[h(P\&L)\right] \sim h(0)+ \epsilon \, h'(0) \, \Lambda_1(\sigma) + \epsilon
   \, \frac{1}{2}h''(0) \,\Lambda_2(\sigma) + \sqrt{\epsilon \,
\left[h'(0) \right]^2 \,
  \Psi (\sigma)} \; \widetilde{G},
  \end{equation}
where $\widetilde{G}$ is a standard gaussian random variable. Otherwise, we have the following Chebyshev's inequality
\begin{equation}
\widetilde{\mathbb{P}}\left[  \mathbb{E}\left[h(P\&L)\right] - 
h(0)- \epsilon \; \mathcal{A} \left[  \mathbb{E}\left[h(P\&L)\right]^{ }\right]  \geq k \,  \sqrt{\epsilon \;
   \Gamma \left[\mathbb{E}\left[h\left( P\&L \right)\right]
    \right]   }  \right] \leq \frac{1}{1+k^2}.
\end{equation}

\end{teorema}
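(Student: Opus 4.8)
The plan is to regard the scalar $\mathbb{E}[h(P\&L)]$ as a random variable on the uncertainty space $(\widetilde{\Omega},\widetilde{\mathcal{F}},\widetilde{\mathbb{P}})$ --- it is a function of the uncertain coefficients $a_i$ which carry the error structure --- and to extract its bias and variance by applying the generator $\mathcal{A}$ and the carr\'e du champ $\Gamma$ through their chain rules \eqref{functional-calculus} and \eqref{bias-chain-rule}. The decisive structural fact is point 2 of Assumption \ref{assumption-math}: the error structure on $\varsigma$ is independent of the Brownian space, so the $\mathbb{P}$-expectation $\mathbb{E}$ commutes with $\Gamma$, with $\mathcal{A}$ and with the sharp $(\,)^{\#}$ of Proposition \ref{prop:sharp}. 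I would first record that, under the unbiasedness Assumption \ref{assumption-EA}, one evaluates at $\varsigma=\sigma$ and $P\&L(T)=0$ $\mathbb{P}$-a.s., so that wherever $h'$ and $h''$ of $P\&L$ occur they may be replaced by the constants $h'(0)$ and $h''(0)$; this is what makes the two right-hand sides collapse onto $\Lambda_1,\Lambda_2,\Psi$.

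For the variance I would work with the sharp, which is linear and hence far cheaper than $\Gamma$. Since the payoff $\Phi(S_T)$ and the dynamics $dS_t=S_t\mu\,dt+S_t\sigma\,dB_t$ are non-erroneous (Assumption \ref{assumption-fin}, so $S_t^{\#}=0$), only $F(\varsigma,x,0)$ and $\Delta(\varsigma,S_t,t)$ carry the uncertainty. Applying the sharp chain rule of Proposition \ref{prop:sharp} to $\varsigma=\sum_i a_i\phi_i$ gives $(P\&L)^{\#}=\sum_i a_i^{\#}\,B_i$ with $B_i=\frac{\partial F}{\partial\sigma(\phi_i)}\phi_i(0,x)+\int_0^T\frac{\partial\Delta}{\partial\sigma(\phi_i)}\phi_i\,dS_t$. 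Commuting $\mathbb{E}$ past $(\,)^{\#}$ and annihilating the $dB_t$-martingale leaves $\mathbb{E}[B_i]=\frac{\partial F}{\partial\sigma(\phi_i)}\phi_i(0,x)+\mu\int_0^T\mathbb{E}\bigl[\frac{\partial\Delta}{\partial\sigma(\phi_i)}\phi_iS_t\bigr]dt$; squaring, taking $\widehat{\mathbb{E}}$ and using the product structure $\widehat{\mathbb{E}}[a_i^{\#}a_j^{\#}]=\delta_{ij}\Gamma[a_i]$ reproduces exactly $[h'(0)]^2\Psi(\sigma)$, i.e. \eqref{gamma-P-and-L}.

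For the bias I would commute $\mathcal{A}$ with $\mathbb{E}$ and use \eqref{bias-chain-rule}, getting $\mathcal{A}[\mathbb{E}[h(P\&L)]]=h'(0)\,\mathbb{E}[\mathcal{A}[P\&L]]+\tfrac12 h''(0)\,\mathbb{E}[\Gamma[P\&L]]$, so that $\Lambda_1=\mathbb{E}[\mathcal{A}[P\&L]]$ and $\Lambda_2=\mathbb{E}[\Gamma[P\&L]]$. Expanding $\mathcal{A}[P\&L]$ by the multivariate generator rule produces a first-order piece $\sum_i B_i\,\mathcal{A}[a_i]$ and a second-order piece $\tfrac12\sum_i(\partial^2_{a_i}P\&L)\,\Gamma[a_i]$, where $\partial^2_{a_i}P\&L=\frac{\partial^2 F}{\partial\sigma(\phi_i)^2}\phi_i^2(0,x)+\int_0^T\frac{\partial^2\Delta}{\partial\sigma(\phi_i)^2}\phi_i^2\,dS_t$; taking $\mathbb{E}$ (again killing the $dB_t$-integrals) yields precisely the four terms of $\Lambda_1$. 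For $\Lambda_2=\sum_i\Gamma[a_i]\,\mathbb{E}[B_i^2]$ I would split $B_i$ into the $\mathcal{F}_0$-measurable-plus-drift part $D_i=\frac{\partial F}{\partial\sigma(\phi_i)}\phi_i(0,x)+\mu\int_0^T\frac{\partial\Delta}{\partial\sigma(\phi_i)}\phi_iS_t\,dt$ and the martingale part $M_i=\int_0^T\frac{\partial\Delta}{\partial\sigma(\phi_i)}\phi_iS_t\sigma\,dB_t$, compute $\mathbb{E}[M_i^2]$ by the It\^o isometry (the first line of $\Lambda_2$) and $\mathbb{E}[D_i^2]$ directly (the second line).

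I expect the genuine technical work to lie in justifying that the closed operators $\Gamma$, $\mathcal{A}$ and $(\,)^{\#}$ may be carried inside the stochastic integral $\int_0^T\Delta\,dS_t$, i.e. that the $L^2$-limit defining the It\^o integral commutes with differentiation in the uncertainty variables; this is exactly where the closability of $\mathcal{E}$ and the functional calculus are indispensable, together with the regularity and boundedness furnished by Assumptions \ref{assumption-payoff} and \ref{assumption-math} (to keep $F,\Delta$ and their Gateaux derivatives in the relevant domains and to guarantee $P\&L\in\mathcal{D}\mathcal{A}$). A second delicate point is that $\mathbb{E}[B_i^2]$ also contains a drift--martingale cross term $2\mu\,\mathbb{E}\bigl[(\int_0^T\frac{\partial\Delta}{\partial\sigma(\phi_i)}\phi_iS_t\,dt)\,M_i\bigr]$, which does not vanish in general and which the stated $\Lambda_2$ omits; I would treat its suppression as part of the expansion approach (retaining the initial, diffusive and purely drift contributions) and flag it as the step requiring the most care, since it is not annihilated by the martingale property alone. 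Finally, the truncated Gaussian expansion \eqref{equation:rel_h} is read off by applying Remark \ref{remark-expansion} to the random variable $\mathbb{E}[h(P\&L)]$, and the concentration bound is the direct specialization of Proposition \ref{prop:Cheb}.
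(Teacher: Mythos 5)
Your proposal is correct and follows essentially the same route as the paper's own proof: the variance is obtained via the sharp operator (with $(\Phi(S_T))^{\#}=0$, the sharp commuted through the stochastic integral, the $dB_t$-martingale killed by the $\mathbb{P}$-expectation, and $h'(P\&L)$ replaced by $h'(0)$ under Assumption \ref{assumption-EA}), while the bias comes from $\mathcal{A}\left[\mathbb{E}[h(P\&L)]\right]=\mathbb{E}\left[h'(0)\,\mathcal{A}[P\&L]+\tfrac{1}{2}h''(0)\,\Gamma[P\&L]\right]$ with exactly your identifications $\Lambda_1=\mathbb{E}[\mathcal{A}[P\&L]]$ and $\Lambda_2=\mathbb{E}[\Gamma[P\&L]]$, the latter evaluated by the It\^o isometry. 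The one point where you go beyond the paper is the cross term between the Lebesgue (drift) part and the It\^o-martingale part of $\int_0^T\frac{\partial\Delta}{\partial\sigma(\phi_i)}\phi_i\,dS_t$ that appears when squaring inside $\mathbb{E}[\Gamma[P\&L]]$: the paper drops it silently (appealing only to ``the properties of the Ito-integrals''), so your flagging of that term identifies a genuine looseness in the published argument rather than a defect of your own proof.
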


\begin{proof}
We start with the study of the variance. Thanks to assumption \ref{assumption-1}, included in assumption \ref{assumption-math}, the operator $\Gamma$ admits a sharp operator, in particular in each sub-error structure associated to each parameter $a_i$ it exists a sharp operator denoted $()^{\#}$. Moreover, the error structure, defined in assumption \ref{assumption-math}, is the product of the sub-error structures on each parameter, so the different sub-error structures are independent among themselves. 

We now study the sharp of $\mathbb{E}\left[h\left( P\&L \right)\right]$. Thanks to the linearity of the sharp operator and the expectation and the smoothness of the test function $h$ we have
\begin{equation}\label{eqn:proof-gamma}
\begin{array}{rcl}
 \displaystyle \left(\mathbb{E}\left[h\left( P\&L \right)\right]\right)^{\#} & = & 
  \displaystyle   \mathbb{E}\left[h'\left(P\&L \right) \; (P\&L)^{\#} \right]  \\
    & =  & \displaystyle \mathbb{E}\left[h'\left(P\&L \right) \;
    \left\{ F(\varsigma, \, x, \, 0) + \int_0^T \Delta(\varsigma, \, S_t, \, t) dS_t - \Phi(S_T) \right\}^{\#}\right].
    \end{array}
\end{equation}
Using the linearity of the sharp on braces, we remark that $(\Phi(S_T))^{\#}=0$ since the payoff is independent on the volatility estimated by the option seller. The same argument and the closedness of the sharp enable us to write
$$
\left( \int_0^T \Delta(\varsigma, \, S_t, \, t) \, dS_t\right)^{\#} =  \int_0^T \left(\Delta(\varsigma, \, S_t, \, t) \right)^{\#} \,dS_t,
$$
where we have used the first point of assumption \ref{assumption-fin}.
Now, we recall that the price and the hedging position on the underlying are smooth functions with respect to the volatility since the SDE  (\ref{SDE-SV}) has a gaussian kernel and thanks to assumption \ref{assumption-payoff}. Therefore, we can take the Gateaux-derivatives of $F$ and $\Delta$ with respect to the volatility change in the direction of the functions $\phi_i$, denoted $\frac{\partial}{\partial \sigma(\phi_i)}$ and we have     
\begin{equation}\label{eqn:proof-sharp}
\begin{array}{rcl}
\displaystyle F(\varsigma,\, x, \, 0)^{\#} &= & \displaystyle  \sum_i  \frac{\partial F}{\partial \sigma(\phi_i)} (\varsigma, x,\, 0) \;\phi_i(0,\, x,\, \omega) \; a_i^{\#} \\
\displaystyle  \left(\int_0^T \Delta(\varsigma, \, S_t, \, t) \, dS_t\right)^{\#} &= &
\displaystyle \sum_i a_i^{\#} \; \int_0^T
 \frac{\partial \Delta}{\partial \sigma(\phi_i)}(\varsigma, \, S_t, \, t) \; \phi_i(t,\, S_t, \, \omega) \, dS_t  
\end{array}
\end{equation}
where we have make use of the expansion of the volatility given by assumption \ref{assumption-math}, as well the linearity of the sharp operator, see proposition \ref{prop:sharp}.
We take the $\mathbb{P}$-expectation of the two right sides. We remark that $\phi_i(0,\, x,\, \omega)$ must be $\mathcal{F}_0$-measurable, so is independent of $\omega$ and we denote it $\phi_i(0,\, x)$. 
In the second equation the integral is split into a Lebesque and a stochastic one. Thanks to the assumption \ref{assumption-payoff}, the stochastic integral is a martingale and the expectation is worth zero. Thanks to the Fubini theorem we can exchange the Lebesque integral with the expectation. At the end, we find
\begin{eqnarray*}
\mathbb{E}\left[F(\varsigma,\, x, \, 0)^{\#} \right] &= & \sum_i  \frac{\partial F}{\partial \sigma(\phi_i)} (\varsigma, x,\, 0) \phi_i(0,\, x) \; a_i^{\#} \\
\mathbb{E}\left[  \left(\int_0^T \Delta(\varsigma, \, S_t, \, t) \, dS_t\right)^{\#} \right] 
&= & \sum_i a_i^{\#} \; \mu\, \int_0^T \mathbb{E}\left[ \frac{\partial \Delta}{\partial \sigma(\phi_i)}(\varsigma, \, S_t, \, t) \, S_t \,  \phi_i(t,\, S_t, \, \omega) \right] \, dt .
\end{eqnarray*}
Now, we come back to the equation (\ref{eqn:proof-gamma}), the first term into the expectation is $h'(P\&L)$. The law of the profit and loss process is worth zero almost surely if we suppose that $\varsigma= \sigma$. Using the assumption \ref{assumption-EA}, we can therefore replace $h'(P\&L)$ with $h'(0)$.   

We conclude the proof on the relation for the quadratic error $\Gamma\left[ \; \mathbb{E}\left[h\left( P\&L \right)\right]\;  \right]$  using the independence between the error structure on each parameter $a_i$ and the definition of the sharp operator, so we find the relation (\ref{gamma-P-and-L}).

The study of the bias is more complicated, we start applying the linearity and the closedness of  the bias operator:
\begin{equation}\label{eqn:proof-bias}
 \mathcal{A} \left[\mathbb{E}\left[h(P \& L)\right]\right] = 
 \mathbb{E}\left[ \mathcal{A}[h(P \& L)]\right] =
  \mathbb{E} \left[ h'(P \& L) \, \mathcal{A}\left[ P \& L  \right] + \frac{1}{2} h^{\prime
    \prime}(P \& L)  \, \Gamma \left[P \& L\right]\right]
  \end{equation}
where, in the second identity, we have exploited the smoothness of $h$ and the chain rule of $\mathcal{A}$. We study the two terms of the previous identity separately. We start studying the term depending on the quadratic error. 
Thanks to relations (\ref{eqn:proof-sharp}) and the property of sharp, see proposition 
\ref{prop:sharp}, we have 
$$
\Gamma  \left[P \& L \right] = \sum_i   \left\{ \frac{\partial F}{\partial \sigma(\phi_i)} 
(\varsigma, x,\, 0) \;\phi_i(0,\, x,\, \omega)  + \int_0^T
 \frac{\partial \Delta}{\partial \sigma(\phi_i)}(\varsigma, \, S_t, \, t) \; \phi_i(t,\, S_t, \, \omega) \, dS_t
\right\}^2 \; \Gamma[a_i] (a_i).
 $$
 We remark that the integral into the brackets can be split into a Lebesque and a stochastic integral. We can now compute the expectation under $\mathbb{P}$ and we find
\begin{equation}\label{eqn:Gamma-proof}
\begin{array}{l}
 \displaystyle   \mathbb{E}\left[\Gamma \left[P \& L \right]\right]  =  \displaystyle 
  \sum_i   \Gamma \left[ a_i \right] \left( a_i \right) \; \,  \left\{   \int_0^T \mathbb{E} \left[  \left(
\frac{\partial \Delta}{\partial \sigma(\phi_i)}(\varsigma, \, S_s, \, s) \, \phi_i(s,\, S_s,\, \omega)
\right)^2 \; S_s^2 \; \sigma^2 \right] ds   \right. \\
\hspace{0.5cm} \displaystyle + \left. \mathbb{E} \left[
 \left( \frac{\partial F}{\partial \sigma(\phi_i)}(\varsigma,\, x, \, 0)\, \phi_i(0, \, x) + 
 \mu\, \int_0^T  \frac{\partial \Delta}{\partial \sigma(\phi_i)}(\varsigma, \, S_s, \, s)  \,  
 \phi_i(s,\, S_s, \, \omega) \, S_s \, ds 
        \right)^2 \right]
 \right\}
\end{array}
\end{equation}
where we have make use of the properties of the Ito-integrals. We remark in particular that 
$\mathbb{E}\left[\Gamma \left[P \& L \right]\right] \neq \Gamma \left[\mathbb{E} \left[P \& L \right]\right]$, since the operator $\Gamma$ is bilinear.

Now we study the term $\mathcal{A}[P\&L]$, we apply the linearity of this operator and we find
$$
\mathcal{A}[P\&L] = \mathcal{A}\left[F(\varsigma,\, x, \,0)\right] + \mathcal{A}\left[ \int_0^T \Delta(\varsigma,\, S_t, \,t)\, dS_t   \right] -  \mathcal{A}\left[ \Phi(S_T)\right].
$$
The last term is worth zero, since the final payoff is completely defined by $S_T$ and does not depend on the volatility estimated by the option seller, see assumptions \ref{assumption-fin} and \ref{assumption-payoff}. Thanks to the same assumptions and the closedness of the bias operator $\mathcal{A}$, we have
$$
\mathcal{A}\left[ \int_0^T \Delta(\varsigma,\, S_t, \,t)\, dS_t   \right] =  \int_0^T 
\mathcal{A}\left[\Delta(\varsigma,\, S_t, \,t) \right]\, dS_t.
$$
Thanks to the same argument used for the quadratic error, we can take the Gateaux-derivatives of $F$ and $\Delta$ with respect to a variation of the volatility along the component $\phi_i$ and using the bias chain rule (\ref{bias-chain-rule}), we have  
\begin{equation*}
\begin{array}{rcl}
\displaystyle \mathcal{A}\left[F(\varsigma,\, x, \,0)\right]   & = & \displaystyle
 \sum_i  \frac{\partial F}{\partial \sigma(\phi_i)} 
(\varsigma, x,\, 0) \;\phi_i(0,\, x,\, \omega) \; \mathcal{A}[a_i](a_i) \\
& & \displaystyle + \frac{1}{2}  \sum_{i} \frac{\partial^2 F}{\partial [\sigma(\phi_i)]^2 
}  (\varsigma, x,\, 0) \;\phi^2_i(0,\, x,\, \omega)  \; \Gamma[a_i](a_i) \\
\displaystyle  \int_0^T  \mathcal{A}\left[\Delta(\varsigma,\, S_t, \,t) \right]\, dS_t  
& = & \displaystyle \sum_i \left[ \int_0^T \frac{\partial \Delta}{\partial \sigma(\phi_i)} 
(\varsigma, S_t,\, t) \;\phi_i(t,\, S_t,\, \omega) \, dS_t \right]\; \, \mathcal{A}[a_i](a_i)   \\
 &  &  \displaystyle  + \frac{1}{2}  \sum_{i}  \left[\int_0^T 
\frac{\partial^2 \Delta}{\partial [\sigma(\phi_i)]^2 
}  (\varsigma, S_t,\, t) \;\phi^2_i(t,\, S_t,\, \omega)\, dS_t \right] \; \, \Gamma[a_i](a_i)
\end{array}
\end{equation*}
where we have used that the sum over the index i has a finite number of terms thanks to the point 4 of the assumption \ref{assumption-math}. We can take the expectation and, thanks to assumptions \ref{assumption-math} and \ref{assumption-payoff}, the stochastic integrals are martingales. Therefore, we have
\begin{equation}\label{eqn:Bais-proof}
\begin{array}{rcl}
\displaystyle \mathbb{E}\left[\mathcal{A}[P \& L] \right] & = & \displaystyle 
    \sum_i  \frac{\partial F}{\partial \sigma(\phi_i)} 
(\varsigma, x,\, 0) \;\phi_i(0,\, x) \; \mathcal{A}[a_i](a_i) \\
&  & \displaystyle + \sum_i \int_0^T \mu \, \mathbb{E} \left[ \frac{\partial \Delta}{\partial 
\sigma(\phi_i)} (\varsigma, S_t,\, t) \;\phi_i(t,\, S_t,\, \omega) \, S_t \right]\, dt \; 
\, \mathcal{A}[a_i](a_i)   \\
& & \displaystyle + \frac{1}{2}  \sum_{i} \frac{\partial^2 F}{\partial [\sigma(\phi_i)]^2 
}  (\varsigma, x,\, 0) \;\phi^2_i(0,\, x)  \; \Gamma[a_i](a_i) \\ 
 &  &  \displaystyle  + \frac{1}{2}  \sum_{i}  \int_0^T  \mu \, \mathbb{E}\left[
\frac{\partial^2 \Delta}{\partial [\sigma(\phi_i)]^2 
}  (\varsigma, S_t,\, t) \;\phi^2_i(t,\, S_t,\, \omega)\, S_t \right] \, dt \; \, \Gamma[a_i](a_i).
\end{array}
\end{equation}
We consider therefore the relation (\ref{eqn:proof-bias}). Thanks to assumption 
\ref{assumption-EA}, we have $h'(P\&L)=h'(0)$ and $h^{\prime \prime}(P\& L) = h^{\prime \prime}(0)$. We conclude the computation using relations (\ref{eqn:Gamma-proof}) and (\ref{eqn:Bais-proof}). We find the relation  (\ref{A-P-and-L}).
The proof ends with the truncated expansion that is a consequence of remark \ref{remark-expansion} and proposition \ref{prop:Cheb}, see Bouleau
\cite{bib:Bouleau-MC} and \cite{bib:Bouleau-erreur4}.

\end{proof}

\subsection{Option Pricing}

In order to interpret this result in finance, we consider that the option seller 
knows the presence of errors in his procedure and wants to neutralize this effect.
It is plain that the option seller does not control the uncertainties, so 
the risk related to the space $(\widetilde{\Omega}, \, \widetilde{\mathcal{F}}, 
\, \widetilde{\mathbb{P}})$, i.e. to the uncertainty on the parameters, can not be hedged.
Furthermore, the relation  (\ref{equation:rel_h}) of the main theorem 
\ref{theorem-main} involves that it is not possible to bound this risk, i.e. to propose a super-hedging strategy, see Avelaneda \cite{bib:Avelaneda} and Lyons \cite{bib:Lyons}. Indeed, if we compute the super-hedging price of a contingent claim, we find a  very high buy-price and a very small sell-price, i.e. the bid-ask spread becomes to large compared with the market spreads.  

This result rises directly from the approach followed. 
We consider a very large class of model and any restrictive hypothesis on the law of the parameters uncertainties. The only strong hypothesis is on their magnitude, we have supposed that the uncertainty are small compared with the estimated values. We exploit further 
this property. Thanks to that, the probability that the random variable $\mathbb{E}[h(P\&L)]$ 
takes values far from the mean is very small and becomes negligible if this distance is big compared with the proper length $\sqrt{\epsilon [h'(0)]^2 \Psi(\sigma)}$.
Therefore, we introduce the following principle to define the price of a contingent claim under this uncertainty model.

\begin{principle}[Asset pricing under uncertainty]\label{principle-1}\hfill

The asset seller  fixes a tolerable risk probability $\alpha < 0.5$ and
accepts to sell the option at any price $F_{\text{sell}}$, such that 
\begin{equation}\label{eqn-principe-1}
\widetilde{\mathbb{P}}\left\{  F_{\text{sell}} - F+ \mathbb{E}\left[ P \& L\right]  < 0 \right\} \leq \alpha
\end{equation}
where $F$ denotes the cost of the hedging strategy, that is the theoretical price of the option without uncertainty, i.e. $F = \mathbb{E}[\Phi(S_T)]$. 
\end{principle}

Before applying this principle to our analysis, we discuss the financial implications of this principle.
The first remark is that this principle does not give a single price but an half-line of possible selling price. Indeed if $X$ is a selling price and $Y>X$ then 
$$
\widetilde{\mathbb{P}}\left\{  Y - F+ \mathbb{E}\left[P \& L\right]  < 0 \right\} < \widetilde{\mathbb{P}}\left\{ X - F+ \mathbb{E}\left[P \& L\right]  < 0 \right\} \leq \alpha,
$$
so $Y$ is a selling price too.
We call ask price, denoted $F_{\text{ask}}$, the infimum of the set of all selling option. 
We analyze the event $\{X - F+ \mathbb{E}\left[P \& L\right]  < 0\}$. On this event, if the seller option has sold the option at the price $X$, he loses money at maturity, since he has to pay $F$ to buy the hedging portfolio and the noise on this hedging strategy conducts him to lose money. However,  the probability of this event is smaller than $\alpha$. 

Therefore, the asset pricing principle says that the option seller accepts to sell the option at a price $X$ if this price is high enough to guarantee that he loses money with a probability smaller than $\alpha$. It is plain that the parameter $\alpha$ depends on the risk aversion of the option seller. However, if he is too risk-adverse, he proposes their contingent claims at a too high price, then the buyers can find other traders that offer the same options at a lower price. We remark that there is a 
likeness between this principle and the hypothesis testing in statistics, indeed $\alpha$ can be interpreted as the critical probability of the error of the first kind, i.e. "lose money in the contract" in our case. The error of the second kind is to propose a too high price such that the contract is not signed, i.e. "lose the opportunity to make money". We will study the problem of the optimal proposal price in a next paper.  
We now assume that the tolerable risk probability $\alpha$ is fixed and that the option seller chooses the smaller price consistent with him tolerable risk probability.

We also remark that the principle \ref{principle-1} defines the purchase price too. As a matter of fact, if a trader accept to buy an option, he has to take a negative position on the hedging portfolio to cover them, i.e. he has to follow the opposite of the hedging strategy. In accord with the principle \ref{principle-1}, the option buyer accepts to buy the option at any price $F_{\text{buy}}$ such that
$$
\widetilde{\mathbb{P}}\left\{ - F_{\text{buy}} + F- \mathbb{E}\left[ P \& L\right]  < 0 \right\} \leq \alpha,
$$
that is the relation (\ref{eqn-principe-1}) where we have changed all signs in order to consider the short position. the previous relation can be rewritten as 
$$
\widetilde{\mathbb{P}}\left\{ F_{\text{buy}} - F + \mathbb{E}\left[ P \& L\right]  > 0 \right\} \leq \alpha.
$$
The previous relation and the relation (\ref{eqn-principe-1}) lead us to the following remark:

\begin{remarque}[Bid-Ask Spread]\hfill

If all option traders on the market are risk adverse, then they are a difference between the best purchase price, denoted $F_{\text{bid}}$ and the best seller price $F_{\text{ask}}$. 
\end{remarque} 
This remark results from the fact that if the trader are risk adverse the tolerable risk probability must be smaller that $0.5$, then $F_{\text{bid}} < F + \mathbb{E}^{\widetilde{\mathbb{P}}} 
\left[ \, \mathbb{E}[P\&L] \right] < F_{\text{ask}}$. 

The previous principle and the theorem \ref{theorem-main} have the following immediate consequence:

\begin{proposizione}[Option prices]\hfill

We assume all hypotheses of theorem  \ref{theorem-main} held. We suppose that the option seller follows the principle \ref{principle-1}, then he accepts to sell the option at any price bigger than 
\begin{displaymath}
F_\text{ask} = F + \epsilon \; A
\left[\mathbb{E}[h(P \& L)] \right]  + \sqrt{\epsilon \; \Gamma
\left[\mathbb{E}[h(P \& L)]\right]}\; \mathcal{N}_{1-\alpha}
\end{displaymath}
where $\mathcal{N}_{1-\alpha}$ is the $(1-\alpha)$-quantile of the reduced
normal law, or a function given by Chebychev's inequality (\ref{cheb-equation}) in the conservative case. Likewise, the option buyer accepts to buy the option at any
price lower than
\begin{displaymath}
F_\text{bid} = F + \epsilon \; A
\left[\mathbb{E}[h(P \& L)]\right] + \sqrt{\epsilon \; \Gamma
\left[\mathbb{E}[h(P \& L)]\right]} \; \mathcal{N}_{\alpha}
\end{displaymath}

\end{proposizione}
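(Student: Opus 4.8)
The plan is to obtain this Proposition as a direct corollary of Theorem \ref{theorem-main} and the pricing Principle \ref{principle-1}, the whole content being a Gaussian quantile inversion. First I would specialise Theorem \ref{theorem-main} to the identity test function $h=\mathrm{id}$, for which $h(0)=0$, $h'(0)=1$ and $h''(0)=0$; then $\mathbb{E}[h(P\&L)]=\mathbb{E}[P\&L]$, the bias (\ref{A-P-and-L}) collapses to $A\big[\mathbb{E}[P\&L]\big]=\Lambda_1(\sigma)$ (the $\Lambda_2$ contribution is killed by $h''(0)=0$) and the quadratic error (\ref{gamma-P-and-L}) to $\Gamma\big[\mathbb{E}[P\&L]\big]=\Psi(\sigma)$. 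In particular the truncated expansion (\ref{equation:rel_h}) reads
\begin{equation*}
\mathbb{E}[P\&L]\;\sim\;\epsilon\,A\big[\mathbb{E}[P\&L]\big]+\sqrt{\epsilon\,\Gamma\big[\mathbb{E}[P\&L]\big]}\;\widetilde{G},
\end{equation*}
so that, under $\widetilde{\mathbb{P}}$ and to leading order in $\epsilon$, the variable $\mathbb{E}[P\&L]$ is Gaussian with mean $\epsilon\,A[\mathbb{E}[P\&L]]$ and variance $\epsilon\,\Gamma[\mathbb{E}[P\&L]]$. The non-vanishing clause of Assumption \ref{assumption-1} guarantees $\Gamma[\mathbb{E}[P\&L]]=\Psi(\sigma)>0$, so the normalisation below is licit.

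Next I would insert this law into the defining inequality (\ref{eqn-principe-1}) of Principle \ref{principle-1}. Rewriting the loss event $\{F_{\text{sell}}-F+\mathbb{E}[P\&L]<0\}$ in terms of the standardised Gaussian $\widetilde{G}$ turns the constraint into a one-sided tail bound: its $\widetilde{\mathbb{P}}$-probability equals the standard normal distribution function evaluated at an explicit affine function of $F_{\text{sell}}$, and imposing that this be $\le\alpha$ is, after inverting the distribution function, a single linear inequality in $F_{\text{sell}}$. The infimum of the admissible selling prices is attained at equality, which produces $F_{\text{ask}}=F+\epsilon\,A[\mathbb{E}[h(P\&L)]]+\sqrt{\epsilon\,\Gamma[\mathbb{E}[h(P\&L)]]}\,\mathcal{N}_{1-\alpha}$, the $(1-\alpha)$-quantile entering precisely because $\alpha<0.5$ places the threshold in the upper tail and because the centre of the law is the bias-corrected value $F+\epsilon\,A[\mathbb{E}[P\&L]]$ already identified in the Bid-Ask remark. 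For the buyer I would repeat the argument with the short-position inequality derived just above the statement, i.e. with all signs reversed; the symmetry of the normal quantiles, $\mathcal{N}_{\alpha}=-\mathcal{N}_{1-\alpha}$, then yields $F_{\text{bid}}$ with the $\alpha$-quantile in place of the $(1-\alpha)$-quantile.

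For the conservative variant I would simply replace the Gaussian tail estimate by the Chebyshev bound of Proposition \ref{prop:Cheb}: inequality (\ref{cheb-equation}) applied to $\mathbb{E}[h(P\&L)]$ controls the same tail by $1/(1+k^2)$, and choosing $k$ so that $1/(1+k^2)=\alpha$ gives the factor $k=\sqrt{(1-\alpha)/\alpha}$ that replaces $\mathcal{N}_{1-\alpha}$. The step I expect to be delicate is not the algebra but the legitimacy of treating the order-$\epsilon$ expansion of Remark \ref{remark-expansion} as an exact law inside a probability/quantile computation: the expansion holds only modulo $o(\epsilon)$, so one genuinely controls the quantiles of the true law of $\mathbb{E}[P\&L]$ only up to corrections of the same order, and the Chebyshev route delivers a one-sided bound rather than an exact quantile. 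Making the passage from the heuristic Gaussian surrogate to a firm bid/ask bound rigorous therefore requires a uniform control of these remainder terms, and this is exactly where the smallness hypothesis on $\epsilon$ — the one strong structural assumption of the whole construction — does the real work.
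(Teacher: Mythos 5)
Your proposal is correct and follows exactly the route the paper intends: the paper offers no separate proof, stating the proposition as an ``immediate consequence'' of Theorem \ref{theorem-main} and Principle \ref{principle-1}, and your argument --- specialising the truncated expansion (\ref{equation:rel_h}) to the Gaussian surrogate law for $\mathbb{E}[P\&L]$, inverting the normal quantile in the constraint (\ref{eqn-principe-1}), reversing signs for the buyer, and substituting the Chebyshev bound in the conservative case --- is precisely the implicit derivation, made explicit. Your closing caveat about the expansion holding only modulo $o(\epsilon)$ is a fair and honest observation about the paper's own level of rigour rather than a defect of your proof.
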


%\begin{figure}[h!]
%  \begin{center}
%    \epsfxsize=13cm
%    $$
%    \epsfbox{error.eps}
%    $$
%    \caption{Impact of ambiguity: the Dirac distribution of price X becomes a continuous
%      distribution; the mean shifts of $\epsilon \; A[X]$ and the variance is $ \epsilon \;
%      \Gamma[X]$.} \label{fig:error}
%  \end{center}
%\end{figure}

\begin{remarque}[Mid price and Bid-Ask spread]\hfill

We remark that the two previous prices are symmetric, since
$\mathcal{N}_{\alpha} + \mathcal{N}_{1- \alpha}= 0$; therefore  we have
\begin{eqnarray}
\frac{F_{ask}+F_{bid}}{2} & = & F + \epsilon \; A
\left[\mathbb{E}[h(P \& L)]\right] \\
\frac{F_{ask}-F_{bid}}{2} & = & \sqrt{ \epsilon \; \Gamma
\left[\mathbb{E}[h(P \& L)]\right]} \, \mathcal{N}_{1-\alpha}
\end{eqnarray}
We emphasize that with our model we can reproduce a bid-ask spread
and we can associate its width to the trader's risk aversion (the
probability $\alpha$) and the volatility uncertainty (the term
$\sqrt{\epsilon \; \Gamma \left[\mathbb{E}[h(P \& L)]\right]}$).
Another interesting point it that the mid-price does not depend on 
the tolerable risk $\alpha$

\end{remarque}

\section{Example: log-normal diffusion}

In this section, we give an example of the previous results. In particular,
we consider the log-normal diffusion, i.e. the Black Scholes model \cite{bib:Black-Scholes}.
The underlying follows the SDE  (\ref{Black-Scholes}). In this case, the volatility 
is a parameter, so we replace them by the same parameter multiplied by the identity function.
In this model, the assumption \ref{assumption-math} is simplified, since the series expansion of point 4 is replaced by a unique function, i.e. the identity, while the points 1 and 3 can be replaced by the more general Hamza condition, see Bouleau and Hirsch \cite{bib:Bouleau-Hirsch}. 
We concentrate our analysis on call option, it is plain that a call option does not verify the assumption \ref{assumption-payoff}, but we have a closed form for all greeks in the case of log-normal diffusion and we can check that the Black Scholes pricing formula is $C^2$ at each time t strictly smaller than the maturity T and the vega, i.e. the derivative w.r.t. the volatility, vanishes when t goes to $T$. These properties guarantee that the theorem \ref{theorem-main} remain true  
even without assumption  \ref{assumption-payoff}. In order to simplify our numerical computation we assume that the drift $\mu=0$ under historical probability $\mathbb{P}$, the computations in the case $\mu \neq 0$ can be found on Regis and Scotti \cite{bib:Scotti-1}.

In this case, the theorem \ref{theorem-main} has the following corollary:

\begin{corollario}[Bid and Ask prices with log-normal diffusion]\label{pricing-BS}\hfill

If the underlying follows the Black Scholes SDE   (\ref{Black-Scholes}) without drift, then the bias and the variance of the profit and loss process that hedges a call option verify the following equations:

\begin{eqnarray}
\mathcal{A}[C(x, \, K, \, T)] & = & x \; \frac{e^{-\frac{1}{2} d_1^2}}{\sqrt{2
\pi}} \left\{ \left. A\left[\varsigma \sqrt{T}\right] \right|_{\varsigma=\sigma_0} +
 \frac{d_1 d_2}{2 \sigma_0 \sqrt{T}} \, \left. \Gamma\left[\varsigma \sqrt{T}\right]
 \right|_{\varsigma= \sigma_0} \right\} \\
 \Gamma[C(x, \, K, \, T)] & = & x^2 \; \frac{e^{- d_1^2}}{2 \pi} \, \left. 
 \Gamma\left[\varsigma \sqrt{T}\right]\right|_{\varsigma= \sigma_0}
\end{eqnarray}
where 
$$
d_1 = \frac{\ln x - \ln K + \frac{\sigma_0^2}{2} T
}{\sigma_0 \sqrt{T} }, \; \; d_2 = d_1 - \sigma_0 \sqrt{T},
$$
and $C(x,\, K, \, T)$ denotes the call of strike $K$, maturity $T$ and underlying that quotes $x$.
Moreover we have the following bid and ask prices:
\begin{eqnarray}
C_{\text{ask}}(x, \, K, \, T) &= & x\mathcal{N} (d_1) - K \mathcal{N}(d_2)   
 + \sqrt{\epsilon} \, x \; \frac{e^{-\frac{1}{2} d_1^2}}{\sqrt{2
\pi}} \, \sqrt{ \left. \Gamma\left[\varsigma \sqrt{T}\right]
 \right|_{\varsigma= \sigma_0} } \mathcal{N}_{1-\alpha} \\
& & + \epsilon\, x \;  
\frac{e^{-\frac{1}{2} d_1^2}}{\sqrt{2
\pi}} \left\{ \left.  \mathcal{A} \left[\varsigma \sqrt{T}\right] \right|_{\varsigma=\sigma_0} +
 \frac{d_1 d_2}{2 \sigma_0 \sqrt{T}} \, \left. \Gamma\left[\varsigma \sqrt{T}\right]
 \right|_{\varsigma= \sigma_0} \right\} \nonumber  \\ 
 & & \nonumber \\ 
C_{\text{bid}}(x, \, K, \, T) & = & x\mathcal{N} (d_1) - K \mathcal{N}(d_2) 
+ \sqrt{\epsilon} \, x \; \frac{e^{-\frac{1}{2} d_1^2}}{\sqrt{2
\pi}} \, \sqrt{ \left.\Gamma\left[\varsigma \sqrt{T}\right]
 \right|_{\varsigma= \sigma_0} } \mathcal{N}_{\alpha}  \\ 
& &  + \epsilon\, x \; 
\frac{e^{-\frac{1}{2} d_1^2}}{\sqrt{2
\pi}} \left\{ \left. \mathcal{A} \left[\varsigma \sqrt{T}\right] \right|_{\varsigma=\sigma_0} +
 \frac{d_1 d_2}{2 \sigma_0 \sqrt{T}} \, \left. \Gamma\left[\varsigma \sqrt{T}\right]
 \right|_{\varsigma= \sigma_0} \right\} \nonumber 
\end{eqnarray}

\end{corollario}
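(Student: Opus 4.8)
The plan is to specialize Theorem \ref{theorem-main} to the log-normal setting and then translate the abstract error operators into the classical Black--Scholes greeks. In this model the volatility collapses to a single parameter, so the series expansion of Assumption \ref{assumption-math} reduces to one term with $\phi_1 \equiv 1$, and the standing hypothesis $\mu = 0$ annihilates every drift-dependent integral in $\Lambda_1$, $\Lambda_2$ and $\Psi$. Taking the test function $h$ to be the identity (so $h'(0)=1$, $h''(0)=0$) leaves only $A[\mathbb{E}[P\&L]] = \Lambda_1(\sigma)$ and $\Gamma[\mathbb{E}[P\&L]] = \Psi(\sigma)$, the term $\Lambda_2$ dropping out. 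Since with $\mu=0$ the stochastic integral is a $\mathbb{P}$-martingale and $\mathbb{E}[\Phi(S_T)]$ is the true premium, one has $\mathbb{E}[P\&L] = C(\varsigma) - C(\sigma_0)$ with $C(\sigma_0)$ a deterministic constant; hence the error on $\mathbb{E}[P\&L]$ equals the error on the call price $C$ regarded as a function of the uncertain volatility, which is what licenses stating the conclusions directly as $\mathcal{A}[C]$ and $\Gamma[C]$.

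First I would treat $C$ as a smooth function of the single uncertain quantity $v = \varsigma\sqrt{T}$, writing $C = x\,\mathcal{N}(d_1) - K\,\mathcal{N}(d_2)$ with $d_1 = \tfrac{1}{v}\ln\tfrac{x}{K} + \tfrac{v}{2}$ and $d_2 = d_1 - v$. Applying the bias chain rule (\ref{bias-chain-rule}) and the functional calculus (\ref{functional-calculus}) with respect to $v$ yields
\begin{equation*}
\Gamma[C] = \left(\frac{\partial C}{\partial v}\right)^{2}\Gamma[v], \qquad
\mathcal{A}[C] = \frac{\partial C}{\partial v}\,\mathcal{A}[v] + \frac{1}{2}\,\frac{\partial^{2} C}{\partial v^{2}}\,\Gamma[v],
\end{equation*}
so the remaining work is the computation of the two greeks. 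Using the identity $x\,\mathcal{N}'(d_1) = K\,\mathcal{N}'(d_2)$ together with $\partial_v d_2 = \partial_v d_1 - 1$, the first-order terms telescope and give the vega $\partial_v C = x\,e^{-d_1^2/2}/\sqrt{2\pi}$; differentiating once more and substituting $\partial_v d_1 = -d_2/v$ (with $\mathcal{N}''(d_1)=-d_1\mathcal{N}'(d_1)$) produces the volga $\partial_v^2 C = \bigl(x\,e^{-d_1^2/2}/\sqrt{2\pi}\bigr)\,d_1 d_2/v$. Inserting these and evaluating the coefficient at $v = \sigma_0\sqrt{T}$, so that $d_1 d_2/v = d_1 d_2/(\sigma_0\sqrt{T})$ while the error operators act on $v = \varsigma\sqrt{T}$, reproduces verbatim the stated formulas for $\mathcal{A}[C]$ and $\Gamma[C]$; one checks consistency with the theorem's $\Lambda_1$ and $\Psi$ through the linear relations $\mathcal{A}[\varsigma\sqrt{T}] = \sqrt{T}\,\mathcal{A}[\varsigma]$ and $\Gamma[\varsigma\sqrt{T}] = T\,\Gamma[\varsigma]$.

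The bid and ask prices then follow by substituting $\mathcal{A}[C]$ and $\Gamma[C]$ into the pricing Proposition, reading $F = x\,\mathcal{N}(d_1) - K\,\mathcal{N}(d_2)$ for the unperturbed premium and taking $\sqrt{\Gamma[C]} = \bigl(x\,e^{-d_1^2/2}/\sqrt{2\pi}\bigr)\sqrt{\Gamma[\varsigma\sqrt{T}]}$ to recover the $\sqrt{\epsilon}$ coefficient, with the quantile $\mathcal{N}_{1-\alpha}$ for the ask and $\mathcal{N}_{\alpha}$ for the bid. I expect the main obstacle to be not the greek computation, which is routine, but the justification that Theorem \ref{theorem-main} applies at all, since the call payoff violates Assumption \ref{assumption-payoff} (it is not $\mathcal{C}^2$ with bounded derivatives). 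Here I would exploit the explicit log-normal kernel, verifying that the pricing function is $\mathcal{C}^2$ for every $t<T$ and that the time-$t$ vega $x\sqrt{T-t}\,\mathcal{N}'(d_1)$ vanishes as $t\to T$ --- exponentially off the money and through the factor $\sqrt{T-t}$ at the money --- so that the martingale property of the stochastic integrals and the Fubini exchange used in the theorem survive the limit $t\to T$, extending the conclusion to the call without Assumption \ref{assumption-payoff}.
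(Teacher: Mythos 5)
Your proposal is correct and takes essentially the same route as the paper: specialize Theorem \ref{theorem-main} to the driftless Black--Scholes case with a single erroneous parameter, identify the resulting $\Lambda_1$ and $\Psi$ with $\mathcal{A}[C]$ and $\Gamma[C]$, and substitute into the option-price proposition. The paper's own proof is only a one-line appeal to Theorem \ref{theorem-main} (deferring the computation to \cite{bib:Scotti-2}), so your explicit vega/volga calculation with respect to $v=\varsigma\sqrt{T}$ and your argument for dispensing with Assumption \ref{assumption-payoff} correctly supply exactly the details that citation hides.
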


\begin{proof}
We know that the price of a call on Black Scholes model is given by
$ C(x, \, K, \, T)  = F(\sigma_0, \, x, \, 0) = x\mathcal{N} (d_1) - K \mathcal{N}(d_2) 
$, the two previous equation come directly from a computation of $\mathcal{A}[C(x, \, K, \, T)]$
and $\Gamma[C(x, \, K, \, T)]$ using theorem \ref{theorem-main}, for more detail see Scotti \cite{bib:Scotti-2}.
\end{proof}

\subsection{Analysis of uncertainty impact}

We now analyze the correction on the pricing formula due to the presence of an uncertainty on volatility. We have the following corollaries that can be proved with easy computations, we analyze only the mid-price, i.e. the average price between bid and ask. We also use the notation
$$
r_r^{BS}\left( \sigma_0 \sqrt{T}  \right) = 2 \, \sigma_0\, \sqrt{T} \, \frac{ \left. 
\mathcal{A} \left[\varsigma \sqrt{T}\right] \right|_{\varsigma=\sigma_0} }{
\left. \Gamma\left[\varsigma \sqrt{T}\right]
 \right|_{\varsigma= \sigma_0}}\, .
$$

\begin{corollario}[Delta and Gamma correction]\label{corollary-delta-gamma}\hfill

The bias on the call price due to the uncertainty on volatility verifies 
\begin{equation}
 \frac{\partial \mathcal{A}[C]}{\partial K}  
= \frac{d_1 \; \mathcal{A}[C]}{K \, \sigma_0 \,  \sqrt{T}} -
\frac{x}{2 \,  K \,  \sigma_0^2 \,  T} \frac{e^{-\frac{1}{2} d_1^2}}{\sqrt{2
\pi}} \,  (d_1 + d_2) \,  \left. \Gamma\left[\varsigma \sqrt{T}\right]
\right|_{\varsigma=\sigma_0}.
\end{equation}
Moreover, the bias and its first derivative w.r.t. K are positive (resp. negative) at the money 
if  $r_r^{BS}\left( \sigma_0 \sqrt{T}  \right) > (\, < \, ) \frac{1}{4} \sigma_0^2 \, T$. 
Finally the bias is convex on K at the money if and only if
$$
r_r^{BS}\left( \sigma_0 \sqrt{T}  \right) <   \frac{\sigma_0^4 \, T^2 + 4 \sigma_0^2 \, T + 32}{4 \sigma_0^2 \, T + 16}
$$
\end{corollario}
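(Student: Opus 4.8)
The plan is to treat the two coefficients $M := \mathcal{A}[\varsigma\sqrt T]\big|_{\varsigma=\sigma_0}$ and $V := \Gamma[\varsigma\sqrt T]\big|_{\varsigma=\sigma_0}$ of Corollary \ref{pricing-BS} as constants that do not depend on $K$, and to differentiate the closed expression
\[
\mathcal{A}[C] = x\,\frac{e^{-\frac12 d_1^2}}{\sqrt{2\pi}}\,\Bigl\{ M + \frac{d_1 d_2}{2\sigma_0\sqrt T}\,V\Bigr\}
\]
purely as a function of $K$. The only ingredients needed are the elementary derivatives $\frac{\partial d_1}{\partial K}=\frac{\partial d_2}{\partial K}=-\frac{1}{K\sigma_0\sqrt T}$ (immediate since $d_2=d_1-\sigma_0\sqrt T$ and the subtracted term is $K$-free), together with $\frac{d}{dK}\frac{e^{-\frac12 d_1^2}}{\sqrt{2\pi}}=\frac{d_1}{K\sigma_0\sqrt T}\,\frac{e^{-\frac12 d_1^2}}{\sqrt{2\pi}}$, which follows from the chain rule and the identity $g'(u)=-u\,g(u)$ for the standard Gaussian density $g(u)=\frac{e^{-u^2/2}}{\sqrt{2\pi}}$.

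First I would compute $\partial_K\mathcal{A}[C]$. Differentiating the Gaussian prefactor reproduces $\mathcal{A}[C]$ multiplied by $d_1/(K\sigma_0\sqrt T)$, while differentiating the product via $\partial_K(d_1 d_2)=-(d_1+d_2)/(K\sigma_0\sqrt T)$ yields the remaining term $-\frac{x}{2K\sigma_0^2 T}\frac{e^{-\frac12 d_1^2}}{\sqrt{2\pi}}(d_1+d_2)\,V$. Collecting these two contributions gives exactly the stated formula for $\partial\mathcal{A}[C]/\partial K$; this is the routine part of the argument.

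Next I specialize to the money, $x=K$, where $d_1=\tfrac12\sigma_0\sqrt T$, $d_2=-\tfrac12\sigma_0\sqrt T$, so that $d_1+d_2=0$ and $d_1 d_2=-\tfrac14\sigma_0^2 T$. Substituting into the bias and eliminating $M$ through $r_r^{BS}=2\sigma_0\sqrt T\,M/V$ reduces the bias there to $x\,g(d_1)\,V\bigl(\tfrac{r_r^{BS}}{2\sigma_0\sqrt T}-\tfrac{\sigma_0\sqrt T}{8}\bigr)$; since the prefactor $x\,g(d_1)\,V$ is positive, the sign of $\mathcal{A}[C]$ is governed by whether $r_r^{BS}$ exceeds $\tfrac14\sigma_0^2 T$, which is the first claimed dichotomy. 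Because $d_1+d_2=0$ annihilates the second term of $\partial_K\mathcal{A}[C]$ at the money, the first derivative there equals $\frac{d_1}{K\sigma_0\sqrt T}\mathcal{A}[C]$ with $d_1>0$, so it inherits the same sign, giving the second dichotomy for free.

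The main work, and the step I expect to be the real obstacle, is the convexity claim, which requires $\partial^2_K\mathcal{A}[C]$ at the money. I would differentiate each of the two pieces of $\partial_K\mathcal{A}[C]$ once more and evaluate at $x=K$, exploiting heavily that every term still carrying a factor $(d_1+d_2)$ drops out there, while $\partial_K(d_1+d_2)=-2/(K\sigma_0\sqrt T)$ survives. After collecting terms one finds that the first piece contributes $-\frac{4+\sigma_0^2 T}{4\sigma_0^2 T\,K^2}\,\mathcal{A}[C]$ and the second contributes $+\frac{x\,g(d_1)\,V}{K^2(\sigma_0\sqrt T)^3}$. Factoring out the positive quantity $x\,g(d_1)\,V/K^2$, inserting the at-the-money value of $\mathcal{A}[C]$ and clearing the common denominator $8(\sigma_0\sqrt T)^3$ turns $\partial^2_K\mathcal{A}[C]>0$ into a linear inequality in $r_r^{BS}$; solving it produces precisely the threshold $\frac{\sigma_0^4 T^2+4\sigma_0^2 T+32}{4\sigma_0^2 T+16}$. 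The delicate points are keeping track of which of the several product-rule terms vanish at the money and carrying the algebra through without sign errors; everything else is bookkeeping.
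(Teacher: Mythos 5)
Your proposal is correct and follows essentially the same route as the paper: direct differentiation of the closed-form bias $\mathcal{A}[C]$ from Corollary \ref{pricing-BS} with respect to $K$ (treating $\mathcal{A}[\varsigma\sqrt{T}]$ and $\Gamma[\varsigma\sqrt{T}]$ as $K$-independent constants), evaluation at the forward money where $d_1=-d_2=\tfrac{1}{2}\sigma_0\sqrt{T}$, and reformulation of the sign and convexity conditions in terms of $r_r^{BS}$; your intermediate contributions to $\partial^2_K\mathcal{A}[C]$ at the money agree with the paper's formula and yield the stated threshold. The only difference is cosmetic: the paper computes the full second derivative at general $K$ before specializing, whereas you exploit the vanishing of $(d_1+d_2)$ at the money to shortcut that step.
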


\begin{corollario}[Time evolution of the correction at the money]\hfill

The cross derivative of the bias two times with respect to the strike K and one time with respect to the maturity, i.e. $\frac{\partial^3 \mathcal{A} }{\partial K^2 \partial T}$ is positive if and only if 
$$
    r_r^{BS}\left(\sigma_0 \sqrt{T} \right)  > \frac{1}{4} \; \frac{ \sigma_0^2 \,  T
      \left(\sigma_0^2 \,  T -4\right)^2 +128 }{ 16 + \sigma_0^4 \, T^2 }
$$
\end{corollario}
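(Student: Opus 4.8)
The plan is to reduce the claim to the sign of an explicit rational function of $\theta:=\sigma_0\sqrt T$ that is affine in $r_r^{BS}$, starting from the closed form of $\mathcal A[C]$ in Corollary~\ref{pricing-BS}. Two observations organise the work. First, at the money $x=K$, so the formula for $d_1$ reduces to $d_1=\sigma_0\sqrt T/2=-d_2$, whence $d_1+d_2=0$ and $d_1d_2=-\theta^2/4$; this annihilates several terms. Second, applying the chain rules for $\mathcal A$ and $\Gamma$ to the map $u\mapsto\sqrt T\,u$ yields $\mathcal A[\varsigma\sqrt T]=\sqrt T\,\mathcal A[\varsigma]$ and $\Gamma[\varsigma\sqrt T]=T\,\Gamma[\varsigma]$ (all evaluated at $\varsigma=\sigma_0$), so $r_r^{BS}=2\sigma_0\,\mathcal A[\varsigma]/\Gamma[\varsigma]$ does not depend on $T$. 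Hence the only $T$-dependence I must differentiate is explicit, entering through $\theta$, through the density $n(d_1):=e^{-d_1^2/2}/\sqrt{2\pi}$, and through the powers $\sqrt T,\,T$ multiplying $\mathcal A[\varsigma],\,\Gamma[\varsigma]$.

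The key structural simplification is that evaluation at the money commutes with $\partial_T$. Writing $\Psi(x,K,T):=\partial^2\mathcal A/\partial K^2$, the quantity sought is $\partial_T\Psi|_{x=K}$; since the substitution $x=K$ does not involve $T$, I may compute $\Psi$ first, restrict it to $x=K$ to obtain a function $P(K,T)$, and only then differentiate in $T$. This spares the full three–variable derivative and lets me reuse the at-the-money second derivative already underlying the convexity statement of Corollary~\ref{corollary-delta-gamma}. A direct computation (differentiate the expression for $\partial\mathcal A/\partial K$ of Corollary~\ref{corollary-delta-gamma} once more in $K$, then set $x=K$) gives, with $G:=\Gamma[\varsigma\sqrt T]|_{\varsigma=\sigma_0}=T\,\Gamma[\varsigma]$,
\begin{equation*}
\left.\frac{\partial^2\mathcal A}{\partial K^2}\right|_{x=K}
=\frac{x\,n(d_1)}{K^2}\,G\left[-\frac{r_r^{BS}}{2\theta^3}-\frac{r_r^{BS}}{8\theta}+\frac1{8\theta}+\frac{\theta}{32}+\frac1{\theta^3}\right],
\end{equation*}
whose sign reproduces exactly the convexity threshold of Corollary~\ref{corollary-delta-gamma}, a useful internal check.

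It then remains to differentiate $P(K,T)$ in $T$. Using $d\theta/dT=\sigma_0^2/(2\theta)$, $\partial_T n(d_1)=-\tfrac{\sigma_0^2}{8}\,n(d_1)$, and $G=T\,\Gamma[\varsigma]$ (so that, once $G$ is absorbed via $T=\theta^2/\sigma_0^2$, the bracket above becomes $\sigma_0^{-2}\Gamma[\varsigma]$ times a genuine Laurent polynomial in $\theta$), the product rule collapses after collecting like powers of $\theta$ to
\begin{equation*}
\left.\frac{\partial^3\mathcal A}{\partial K^2\,\partial T}\right|_{x=K}
=\frac{x\,n(d_1)\,\Gamma[\varsigma]}{8K^2}\left[\frac{r_r^{BS}\,(16+\theta^4)}{8\theta^3}+\frac{-\theta^6+8\theta^4-16\theta^2-128}{32\theta^3}\right].
\end{equation*}
The prefactor $x\,n(d_1)\,\Gamma[\varsigma]/(8K^2)$ is strictly positive, since $\Gamma[\varsigma]=\Gamma[a]>0$ by Assumption~\ref{assumption-1}; hence the cross-derivative is positive iff the bracket is. Clearing the positive factor $32\theta^3$ turns this into $4\,r_r^{BS}(16+\theta^4)>\theta^6-8\theta^4+16\theta^2+128$, and since $\theta^6-8\theta^4+16\theta^2=\theta^2(\theta^2-4)^2$ with $\theta^2=\sigma_0^2T$, this is precisely $r_r^{BS}>\tfrac14\big(\sigma_0^2T(\sigma_0^2T-4)^2+128\big)/(16+\sigma_0^4T^2)$, the claimed inequality.

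The only genuine obstacle is the bookkeeping: tracking the three independent channels through which $T$ enters and carrying out the algebra so that the numerator factors cleanly as $\theta^2(\theta^2-4)^2+128$. The commuting observation of the second paragraph, the reduction to the single variable $\theta$, and the sign consistency check against the convexity threshold of Corollary~\ref{corollary-delta-gamma} are what keep the computation under control and guard against algebraic slips.
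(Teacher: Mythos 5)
Your proof is correct and takes essentially the same route as the paper: a direct computation of $\partial^3\mathcal{A}/\partial K^2\partial T$ at the money, written affinely in $r_r^{BS}$, from which the threshold is read off — indeed your final bracket coincides exactly (after the substitutions $\theta^2=\sigma_0^2 T$, $\Gamma[\varsigma\sqrt{T}]=T\,\Gamma[\varsigma]$) with the paper's expression
$\frac{e^{-\sigma_0^2T/8}}{x\,\sigma_0^2T^2\sqrt{2\pi}}\bigl\{\tfrac{16+\sigma_0^4T^2}{32}\,\mathcal{A}[\sigma\sqrt{T}]-\tfrac{\sigma_0^2T(\sigma_0^2T-4)^2+128}{256}\,\tfrac{\Gamma[\sigma\sqrt{T}]}{\sigma_0\sqrt{T}}\bigr\}$,
and your intermediate at-the-money second derivative matches the paper's as well. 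Your two organizing devices — evaluating at $x=K$ before differentiating in $T$, and making explicit the scalings $\mathcal{A}[\varsigma\sqrt{T}]=\sqrt{T}\,\mathcal{A}[\varsigma]$, $\Gamma[\varsigma\sqrt{T}]=T\,\Gamma[\varsigma]$ (hence the $T$-independence of $r_r^{BS}$) — are exactly what the paper's computation uses implicitly, so this is a streamlined presentation of the same argument rather than a different one.
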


For the numerical computations see Scotti \cite{bib:Scotti-2}.
The two previous corollaries have an interesting consequence. 

\begin{proposizione}[Smile on implied volatility]\hfill

It exists an interval $]a, \, b [$ such that $\forall\,  r_r^{BS}(\sigma_0 \, \sqrt{T}) \in ]a, \, b [$ the implied volatility computed using the price given by corollary \ref{pricing-BS} is a convex function around the money.  Furthermore, the second derivative of the implied volatility with respect to strike is a decreasing function of the maturity $T$.  That is to say that the implied volatility is smiled and this effect is more accentuated for short maturities than long one.
\end{proposizione}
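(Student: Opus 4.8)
The plan is to read the claim off a first-order-in-$\epsilon$ inversion of the corrected mid-price of Corollary \ref{pricing-BS} into a Black--Scholes implied volatility $\Sigma(K,T)$. Since the mid-price is $C^{BS}(x,K,T,\sigma_0)+\epsilon\,\mathcal{A}[C]$ (the $\sqrt{\epsilon}$ contributions cancel between bid and ask), $\Sigma$ is a perturbation of $\sigma_0$ of order $\epsilon$. Writing $C^{BS}(x,K,T,\Sigma)=C^{BS}(x,K,T,\sigma_0)+\epsilon\,\mathcal{A}[C]$ and expanding the left side to first order around $\sigma_0$, I would obtain
\begin{equation*}
\Sigma(K,T)-\sigma_0=\epsilon\,\frac{\mathcal{A}[C]}{\nu}+o(\epsilon),
\end{equation*}
where $\nu=\partial C^{BS}/\partial\sigma=x\sqrt{T}\,e^{-d_1^2/2}/\sqrt{2\pi}$ is the Black--Scholes vega.

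First I would insert the closed form of $\mathcal{A}[C]$ from Corollary \ref{pricing-BS}. The factor $x\,e^{-d_1^2/2}/\sqrt{2\pi}$ multiplying the bias is exactly $\nu/\sqrt{T}$, so it cancels against the vega and leaves the simple expression
\begin{equation*}
\Sigma(K,T)-\sigma_0=\frac{\epsilon}{\sqrt{T}}\,\left.\mathcal{A}[\varsigma\sqrt{T}]\right|_{\varsigma=\sigma_0}+\frac{\epsilon}{2\sigma_0 T}\,\left.\Gamma[\varsigma\sqrt{T}]\right|_{\varsigma=\sigma_0}\;d_1 d_2,
\end{equation*}
in which the only strike dependence sits in $d_1 d_2=\bigl(\ln(x/K)\bigr)^2/(\sigma_0^2 T)-\sigma_0^2 T/4$. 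Differentiating twice in $K$ gives $\partial_K^2(d_1d_2)=2\bigl(1+\ln(x/K)\bigr)/(\sigma_0^2 T K^2)$, whose value at the money ($K=x$) is $2/(\sigma_0^2 T x^2)>0$; since $\Gamma[\,\cdot\,]>0$ by Assumption \ref{assumption-1}, this already yields $\partial_K^2\Sigma>0$ at the money and, more generally, on the whole range $K<e\,x$ surrounding it, which is the convexity assertion.

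For the maturity statement I would use the functional calculus \eqref{functional-calculus}: with $F(\varsigma)=\varsigma\sqrt{T}$ one has $\left.\Gamma[\varsigma\sqrt{T}]\right|_{\varsigma=\sigma_0}=T\,\left.\Gamma[\varsigma]\right|_{\varsigma=\sigma_0}$, so that at the money
\begin{equation*}
\left.\frac{\partial^2\Sigma}{\partial K^2}\right|_{K=x}=\frac{\epsilon\,\left.\Gamma[\varsigma]\right|_{\varsigma=\sigma_0}}{\sigma_0^3\,T\,x^2},
\end{equation*}
which decreases in $T$ like $1/T$; this is precisely the claim that the smile flattens with maturity. The same monotonicity follows, alternatively, by feeding the sign of the cross-derivative $\partial^3\mathcal{A}/\partial K^2\partial T$ controlled in the preceding corollary through the vega normalization.

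The delicate point, which I expect to be the real obstacle, is the precise identification of the interval $]a,b[$. The computation above establishes convexity at the single at-the-money strike unconditionally, but to certify a genuine smile --- convex throughout a neighborhood, right-side-up, with its minimum located near the money rather than drifting into the wings --- one must control the sign and slope of the correction away from $K=x$. This is exactly where Corollary \ref{corollary-delta-gamma} enters: the requirement $r_r^{BS}(\sigma_0\sqrt{T})>\tfrac14\sigma_0^2 T$ fixes the sign of the bias and of its strike-slope at the money, while $r_r^{BS}(\sigma_0\sqrt{T})<(\sigma_0^4 T^2+4\sigma_0^2 T+32)/(4\sigma_0^2 T+16)$ governs its convexity; intersecting these constraints, after transporting them through the vega normalization (which shifts the numerical endpoints), produces the admissible window $]a,b[$. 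Carrying out that intersection carefully, and verifying that the transported endpoints stay compatible so the interval is non-empty, is the technical heart of the argument; everything else reduces to the routine differentiation sketched above.
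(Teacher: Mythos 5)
Your proof is correct at the paper's level of rigor, but it takes a genuinely different route. The paper never explicitly inverts the price into an implied volatility: it fixes the special value $r_r^{BS}(\sigma_0\sqrt{T})=\frac{1}{4}\sigma_0^2 T$, for which Corollary \ref{corollary-delta-gamma} makes the bias $\mathcal{A}[C]$ and its strike-slope vanish at the money while staying strictly convex there; it then concludes that the mid-price equals the Black--Scholes price at the money and exceeds it nearby, reads off the smile from positivity of the vega, checks the sign of $\partial^3\mathcal{A}/\partial K^2\partial T$ at this special value for the maturity statement, and finally invokes smoothness of the implied volatility to open up the interval $]a,\,b[$ around $\frac{1}{4}\sigma_0^2 T$ by continuity. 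Your vega-normalized expansion $\Sigma-\sigma_0=\epsilon\,\mathcal{A}[C]/\nu+o(\epsilon)$ shortcuts all of this: since the Gaussian prefactor of $\mathcal{A}[C]$ in Corollary \ref{pricing-BS} is exactly $\nu/\sqrt{T}$, the strike dependence collapses onto $d_1 d_2$, and you obtain an explicit formula for the implied-volatility correction, unconditional convexity near the money (the $\mathcal{A}$-term is strike-independent and drops out of $\partial_K^2$), and the explicit $1/T$ decay of the at-the-money curvature. This is in fact stronger than what the paper establishes: the smile shape is present for every value of $r_r^{BS}$, the index only shifting the level of the smile, and $\frac{1}{4}\sigma_0^2 T$ is precisely the value at which $\Sigma=\sigma_0$ at the money, consistent with the paper's special case.

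One comment on your final paragraph: the hedge there is unnecessary and slightly off the mark. The constraints of Corollary \ref{corollary-delta-gamma} govern convexity of the \emph{price-level} bias, which is sufficient but not necessary for implied-volatility convexity; they are artifacts of the paper's price-level argument, not something you need to transport through the vega normalization. Since your computation gives $\partial_K^2\Sigma>0$ near the money for all values of the relative index, the existence of the interval $]a,\,b[$ is automatic in your framework (any interval around $\frac{1}{4}\sigma_0^2 T$ works). The only genuine loose end, shared by your argument and the paper's, is the formal step of differentiating the $o(\epsilon)$ remainder twice in $K$ (and once in $T$), i.e.\ uniformity of the expansion in the strike, which neither of you justifies.
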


\begin{proof}
We start fixing  $r_r^{BS}(\sigma_0 \, \sqrt{T}) = \frac{1}{4}\sigma_0^2 \, T$. Thanks to corollary 
\ref{corollary-delta-gamma}, we have that $\mathcal{A}[C]=0$ at the money
with its first derivative with respect to the strike. Besides the same corollary assures us 
that the bias $\mathcal{A}[C]$ is strictly convex around the money thanks the 
continuity of the derivatives.  For the same value of $r_r^{BS}(\sigma_0 \, \sqrt{T})$
 we easily check that the cross derivative of the bias two times w.r.t. $K$ 
 and one time w.r.t. T is negative at the money. 
We argue that the call price at the money is equal to the Black Scholes price 
 whereas around the money the call price in our model is bigger than the Black 
 Scholes price. 
Giving that the Black Scholes formula is used to find the implied, we conclude that the implied 
volatility is strictly convex around the money and that this convexity is more marked for short 
maturities. We now remark that the implied volatility belogns to $C^{\infty}$    
as function of the strike $K$ since it is a composition of infinite differentiable function. 
We conclude that, for each sufficiently small neighborhood of the money,  it exists an interval 
$]a, \, b [$,  including  $\frac{1}{4}\sigma_0^2 \, T$, such that the implied volatility remains 
strictly convex and with convexity decreasing with the maturity. 
\end{proof}

\addcontentsline{toc}{section}{Bibliographie}

\end{document}